\DeclareMathOperator{\tops}{top}
\DeclareMathOperator{\bots}{bot}
\DeclareMathOperator{\cross}{cr}
\DeclareMathOperator{\pcr}{cr^\star}
\let\doendproof\endproof
\renewcommand{\endproof}{\qed\doendproof}
\renewcommand{\implies}{\Rightarrow}
\renewcommand{\iff}{\Leftrightarrow}
\newcommand{\eps}{\varepsilon}
\newcommand{\comment}[1]{}
\renewcommand{\phi}{\varphi}
\newcommand{\myparagraph}[1]{\smallskip\noindent\textbf{#1.}}
\renewcommand{\l}{\ensuremath{l}}
\title{Level Planarity: Transitivity vs.\ Even Crossings}
\author{Guido Br\"uckner\inst{1}
   \and Ignaz Rutter\inst{2}
   \and Peter Stumpf\inst{2}}
\institute{Karlsruhe Institute of Technology
      \and University of Passau}
\begin{document}

\maketitle

\begin{abstract}
    Recently, Fulek et al.~\cite{fps-htfrp-15,fps-htfrp2-16,fpss-htmdalp-13}
have presented Hanani-Tutte results for (radial) level planarity, i.e., a graph
is (radial) level planar if it admits a (radial) level drawing where any two
(independent) edges cross an even number of times.
    We show that the \textsc{2-Sat} formulation of level planarity testing due
to Randerath et al.~\cite{rsbhkmsc-asfopolg-01} is equivalent to the strong
Hanani-Tutte theorem for level planarity~\cite{fpss-htmdalp-13}. 
    Further, we show that this relationship carries over to radial level planarity, which yields a novel polynomial-time algorithm for testing radial level planarity.
\end{abstract}

\section{Introduction}
\label{sec:intro}

Planarity of graphs is a fundamental concept for graph theory as a whole, and for graph drawing in particular.
Naturally, variants of planarity tailored specifically to directed graphs have been explored.
A planar drawing is \emph{upward planar} if all edges are drawn as monotone curves in the upward direction.
A special case are \emph{level planar drawings} of level graphs, where the input graph~$G=(V,E)$ comes with a level assignment~$\ell \colon V \to \{1, 2, \ldots, k\}$ for some~$k \in \mathbb N$ that satisfies~$\ell(u) < \ell(v)$ for all~$(u, v) \in E$.
One then asks whether there is an upward planar drawing such that each vertex~$v$ is mapped to a point on the horizontal line~$y = \ell(v)$ representing the level of~$v$.
There are also radial variants of these concepts, where edges are drawn as curves that are monotone in the outward direction in the sense that a curve and any circle centered at the origin intersect in at most one point.
Radial level planarity is derived from level planarity by representing levels as concentric circles around the origin.

Despite the similarity, the variants with and without levels differ significantly in their complexity.
Whereas testing upward planarity and radial planarity are
NP-complete~\cite{gt-otccouarpt-02}, level planarity and radial level planarity
can be tested in polynomial time.
In fact, linear-time algorithms are known for both
problems~\cite{bbf-rlptaeilt-05,jl-lpeilt-99}.
However, both algorithms are quite complicated, and subsequent research has led
to slower but simpler algorithms for these problems~\cite{hh-plptalwec-08,rsbhkmsc-asfopolg-01}.  Recently also constrained variants of the level planarity problem have been considered~\cite{br-pclp-17,kr-olp-17}.

One of the simpler algorithms is the one by Randerath et al.~\cite{rsbhkmsc-asfopolg-01}.
It only considers proper level graphs, where each edge connects vertices on adjacent levels.
This is not a restriction, because each level graph can be subdivided to make it proper, potentially at the cost of increasing its size by a factor of~$k$.
It is not hard to see that in this case a drawing is fully specified by the vertex ordering on each level.
To represent this ordering, define a set of variables~$\mathcal V = \{ uw \mid u, w \in V, u \neq w, \ell(u) = \ell(w) \}$.
Randerath et al.\ observe that there is a trivial way of specifying the existence of a level-planar drawing by the following consistency~\eqref{eq:randerath-constraint-consistency}, transitivity~\eqref{eq:randerath-constraint-transitivity} and planarity constraints~\eqref{eq:randerath-constraint-planarity}:
\begin{alignat}{5}
    & \forall uw     &&{}\in \mathcal V                                    &&{}:
\quad uw          &&{}\iff \neg && wu \label{eq:randerath-constraint-consistency}  \\
    & \forall uw, wy &&{}\in \mathcal V                                    &&{}:
\quad uw \land wy &&{}{}\implies      &&{} uy \label{eq:randerath-constraint-transitivity} \\
 & \forall uw, vx &&{}\in \mathcal V \text{ with } (u, v), (w, x) \in E\text{ independent} &&{}: \quad uw
&&{}\iff          &&{} vx \label{eq:randerath-constraint-planarity}
\end{alignat}
The surprising result due to Randerath et al.~\cite{rsbhkmsc-asfopolg-01} is
that the satisfiability of this system of constraints (and thus the existence of
a level planar drawing) is equivalent to the satisfiability of a \emph{reduced
constraint system} obtained by omitting the transitivity constraints~\eqref{eq:randerath-constraint-transitivity}.
That is, transitivity is irrelevant for the satisfiability.
Note that a satisfying assignment of the reduced system is not necessarily
transitive, rather Randerath et al.\ prove that a solution can be made transitive without invalidating the other constraints.
Since the remaining conditions~\ref{eq:randerath-constraint-consistency} and~\ref{eq:randerath-constraint-planarity} can be easily expressed in terms of \textsc{2-Sat}, which can be solved efficiently, this yields a polynomial-time algorithm for level planarity.

A very recent trend in planarity research are Hanani-Tutte style results.
The (strong) Hanani-Tutte theorem~\cite{c-uwukidr-34,t-tatocn-70} states that a graph is planar if and only if it can be drawn so that any two independent edges (i.e., not sharing an endpoint) cross an even number of times.
One may wonder for which other drawing styles such a statement is true.
Pach and T\'oth~\cite{pt-mdopg-04,pt-mdopg-11} showed that the weak Hanani-Tutte theorem (which requires even crossings for all pairs of edges) holds for a special case of level planarity and asked whether the result holds in general.
This was shown in the affirmative by Fulek et al.~\cite{fpss-htmdalp-13}, who also established the strong version for level planarity.
Most recently, both the weak and the strong Hanani-Tutte theorem have been established for radial level planarity~\cite{fps-htfrp-15,fps-htfrp2-16}.

\subsubsection*{Contribution.}

We show that the result of Randerath et al.~\cite{rsbhkmsc-asfopolg-01} from 2001 is
equivalent to the strong Hanani-Tutte theorem for level planarity.

The key difference is that Randerath et
al.\ consider proper level graphs, whereas Fulek et al.~\cite{fpss-htmdalp-13} work
with graphs with only one vertex per
level. For a graph~$G$ we define two graphs $G^\star$, $G^+$ that are equivalent
to $G$ with respect to level planarity. We show how to transform a Hanani-Tutte drawing of a graph~$G^\star$
into a satisfying assignment for the constraint system of $G^+$ and vice versa.  Since this transformation does
not make use of the Hanani-Tutte theorem nor of the result by
Randerath et al., this establishes the equivalence of the two
results.

Moreover, we show that the transformation can be adapted also to the
case of radial level planarity.  This results in a novel
polynomial-time algorithm for testing radial level planarity by
testing satisfiability of a system of constraints that, much like the
work of Randerath et al., is obtained from omitting all transitivity
constraints from a constraint system that trivially models radial
level planarity.  Currently, we deduce the correctness of the new
algorithm from the strong Hanani-Tutte theorem for radial level
planarity~\cite{fps-htfrp2-16}.  However, also this transformation
works both ways, and a new correctness proof of our algorithm in the
style of the work of Randerath et al.~\cite{rsbhkmsc-asfopolg-01} may
pave the way for a simpler proof of the Hanani-Tutte theorem for
radial level planarity.  We leave this as future work.
The proofs of lemmas marked with $(\star)$ can be found in the
appendix.

\section{Preliminaries}
\label{sec:preliminaries}

A \emph{level graph} is a directed graph~$G = (V, E)$ together with a \emph{level assignment}~$\ell: V \to \{1, 2, \ldots, k\}$ for some~$k \in \mathbb N$ that satisfies~$\ell(u) < \ell(v)$ for all~$(u, v) \in E$.
If~$\ell(u) + 1 = \ell(v)~$ for all~$(u, v) \in E$, the level graph~$G$ is \emph{proper}.
Two independent edges~$(u, v), (w, x)$ are \emph{critical} if~$\ell(u) \le \ell(x)$ and~$\ell(v) \ge \ell(w)$.
Note that any pair of independent edges that can cross in a level drawing of $G$ is a pair of critical edges.
Throughout this paper, we consider drawings that may be non-planar, but we assume at all times that no two distinct vertices are drawn at the exact same point, no edge passes through a vertex, and no three (or more) edges cross in a single point.
If any two independent edges cross an even number of times in a drawing~$\Gamma$ of~$G$, it is called a \emph{Hanani-Tutte drawing} of~$G$.

For any~$k$-level graph~$G$ we now define a \emph{star form}~$G^\star$
so that every level of~$G^\star$ consists of exactly one vertex.  The
construction is similar to the one used by Fulek et
al.~\cite{fpss-htmdalp-13}.  Let~$n_i$ denote the number of vertices
on level~$i$ for~$1 \le i \le k$.  Further,
let~$v_1, v_2, \ldots, v_{n_i}$ denote the vertices on level~$i$.
Subdivide every level~$i$ into~$2n_i$
sublevels~$1^i, 2^i, \ldots, (2n_i)^i$.  For~$1 \le j \le n_i$, replace
vertex~$v_j$ by two vertices~$v_j'$, $v_j''$ with~$\ell(v_j') = j^i$
and~$\ell(v_j'') = n_i + j^i$ and connect them by an
edge~$(v_j', v_j'')$, referred to as the \emph{stretch edge}~$e(v_j)$.
Connect all incoming edges of~$v_j$ to~$v_j'$ instead and connect all
outgoing edges of~$v_j$ to~$v_j''$ instead.  Let~$e = (u, v)$ be an
edge of~$G$.  Then let~$e^\star$ denote the edge of~$G^\star$ that
connects the endpoint of~$e(u)$ with the starting point of~$e(v)$.
See Figure~\ref{fig:normalize}.  Define $G^+$ as
the graph obtained by subdividing the edges of~$G^\star$ so that the
graph becomes proper; again, see Figure~\ref{fig:normalize}.
Let~$(u, v), (w, x)$ be critical edges in~$G^\star$.  Define their
\emph{limits} in~$G^+$ as~$(u', v'), (w', x')$ where~$u', v'$ are
endpoints or subdivision vertices of~$(u, v)$,~$w', x'$ are endpoints
or subdivision vertices of~$(w, x)$ and it
is~$\ell(u') = \ell(w') = \max(\ell(u), \ell(w))$
and~$\ell(v') = \ell(x') = \min(\ell(v), \ell(x))$.

\begin{figure}[t]
    \centering
    \includegraphics[width=\linewidth]{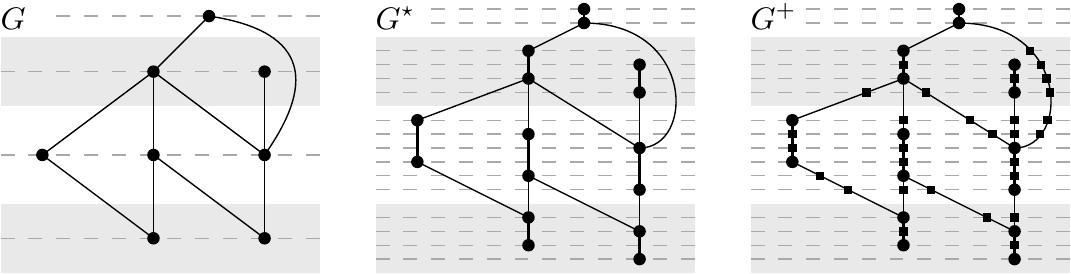}
    \caption{
        A level graph~$G$ (a) modified to a graph~$G^\star$ so as to have only one vertex per level (b) and its proper subdivision~$G^+$ (c).
    }
    \label{fig:normalize}
\end{figure}

\begin{restatable}[$\star$]{lemma}{normalizationlemma}
\label{lem:normalization-lemma}
Let~$G$ be a level graph.  Then

    \smallskip
    \noindent\centering
    $G$ is (radial) level pl.~$\iff G^\star$ is (radial) level pl.~$\iff G^+$ is (radial) level pl.\
\end{restatable}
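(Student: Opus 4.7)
The plan is to prove the two equivalences $G$ (radial) level planar $\iff$ $G^\star$ (radial) level planar and $G^\star$ (radial) level planar $\iff$ $G^+$ (radial) level planar separately, handling the level and radial cases in parallel: the radial case is obtained from the straight-line case by replacing horizontal level lines by concentric circles and upward monotonicity by outward monotonicity, so the same geometric constructions apply verbatim.

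The equivalence $G^\star \iff G^+$ is a standard subdivision argument. By construction $G^+$ is a subdivision of $G^\star$ in which every subdivision vertex is assigned the unique sublevel its host edge traverses; thus, given a (radial) level planar drawing of $G^\star$, one simply places each subdivision vertex on its designated sublevel along the corresponding edge curve, and conversely suppressing all subdivision vertices in a drawing of $G^+$ preserves planarity and monotonicity of the resulting curves.

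The equivalence $G \iff G^\star$ is the substantive part. For the forward direction, given a (radial) level planar drawing $\Gamma$ of $G$, I would first relabel the vertices on each level $i$ so that $v_1, v_2, \ldots, v_{n_i}$ matches their left-to-right (resp.\ cyclic) order in $\Gamma$, and then replace each $v_j$ by a short vertical (resp.\ radial) segment with $v_j'$ at sublevel $j^i$ and $v_j''$ at sublevel $n_i + j^i$. By the chosen ordering the stretch edges can be drawn as near-vertical (resp.\ near-radial) curves that are pairwise non-crossing, and every edge $e^\star$ of $G^\star$ inherits the curve of the corresponding edge $e$ of $G$ with only a local perturbation near its endpoints, so no new crossings are introduced. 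For the backward direction, given a drawing $\Gamma^\star$ of $G^\star$, I would continuously collapse the slab of sublevels $1^i, \ldots, (2n_i)^i$ of each level $i$ onto the single level line $y = i$ (resp.\ onto the circle of radius $i$), contracting each stretch edge $(v_j', v_j'')$ to the single point $v_j$.

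The main obstacle is to verify that the collapse in the backward direction yields a valid (radial) level planar drawing of $G$. The key observation is that the stretch edges at level $i$ are $n_i$ pairwise non-crossing curves with all of them traversing any sublevel strictly between $n_i^i$ and $(n_i+1)^i$, so their horizontal (resp.\ cyclic) order on that sublevel induces a well-defined order of the merged vertices $v_j$ on level $i$. Because the collapse is a continuous deformation of a slab onto a line (resp.\ of an annulus onto a circle) that leaves the exterior of the slab fixed, it preserves the combinatorial embedding of the rest of the drawing and the monotonicity of every non-stretch edge, so the contracted drawing is level planar and has $v_j$ on the required level line.
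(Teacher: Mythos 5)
Your proposal is correct and follows essentially the same route as the paper: the $G^\star \iff G^+$ direction is the same subdivision observation, and for $G \iff G^\star$ your ``key observation'' that all stretch edges of level~$i$ cross a common circle/line between sublevels $n_i^i$ and $(n_i+1)^i$ is exactly the insight the paper highlights (attributing the equivalence to Fulek et al.). One minor imprecision: there is no \emph{sublevel} of $G^\star$ strictly between $n_i^i$ and $(n_i+1)^i$; you should speak of a geometric line (resp.\ circle) placed between those two heights, or use one of the sublevels $n_i^i$, $(n_i+1)^i$ themselves (where one stretch edge touches rather than crosses) --- this does not affect the substance of the argument.
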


\section{Level Planarity}
\label{sec:level-planarity}

Recall from the introduction that Randerath et al.\ formulated level
planarity of a proper level graph $G$ as a Boolean satisfiability
problem $\mathcal S'(G)$ on the
variables~$\mathcal V = \{ uw \mid u \neq w, \ell(u) = \ell(w) \}$ and
the clauses given by
Eq.~(\ref{eq:randerath-constraint-consistency})--(\ref{eq:randerath-constraint-planarity}).

It is readily observed that~$G$ is level planar if and only
if~$\mathcal S'(G)$ is satisfiable.  Now let~$\mathcal S(G)$ denote
the \textsc{Sat} instance obtained by removing the transitivity
clauses~(\ref{eq:randerath-constraint-transitivity})
from~$\mathcal S'(G)$.  Note that it
is~$(uw \implies \neg wu) \equiv (\neg uw \lor \neg wu)$
and~$(uw \implies vx) \equiv (\neg uw \lor vx)$, i.e.,~$\mathcal S(G)$
is an instance of 2-\textsc{Sat}, which can be solved efficiently.
The key claim of Randerath et al.\ is that~$\mathcal S'(G)$ is
satisfiable if and only if~$\mathcal S(G)$ is satisfiable, i.e.,
dropping the transitivity clauses does not change the satisfiability
of~$\mathcal S'(G)$.  In this section, we show that $\mathcal S(G)$ is
satisfiable if and only if $G^\star$ has a Hanani-Tutte level drawing
(Theorem~\ref{thm:hanani-tutte-iff-satisfiability}).  Of course, we do
not use the equivalence of both statements to level planarity of $G$.
Instead, we construct a satisfying truth assignment of $\mathcal S(G)$
directly from a given Hanani-Tutte level drawing
(Lemma~\ref{lem:hanani-tutte-implies-satisfiability}), and vice versa
(Lemma~\ref{lem:satisfiability-implies-hanani-tutte}).  This directly
implies the equivalence of the results of Randerath et al.\ and Fulek
et al.~(Theorem~\ref{thm:hanani-tutte-iff-satisfiability}).

The common ground for our constructions is the constraint system
$\mathcal S'(G^+)$, where a Hanani-Tutte drawing implies a variable
assignment that does not necessarily satisfy the planarity
constraints~(\ref{eq:randerath-constraint-planarity}), though in a controlled
way, whereas a satisfying assignment of~$\mathcal S(G)$ induces an
assignment for~$\mathcal S'(G^+)$ that satisfies the planarity
constraints but not the transitivity constraints~(\ref{eq:randerath-constraint-transitivity}).  Thus, in a sense,
our transformation trades planarity for transitivity
and vice versa.

A (not necessarily planar) drawing~$\Gamma$ of~$G$ \emph{induces} a truth assignment~$\varphi$ of~$\mathcal V$ by defining for all~$uw \in \mathcal V$ that~$\varphi(uw)$ is true if and only if~$u$ lies to the left of~$w$ in~$\Gamma$.
Note that this truth assignment must satisfy the consistency clauses, but does not necessarily satisfy the planarity constraints.
The following lemma describes a relationship between certain truth assignments of $\mathcal S(G)$ and crossings in $\Gamma$ that we use to prove Lemmas~\ref{lem:hanani-tutte-implies-satisfiability} and~\ref{lem:satisfiability-implies-hanani-tutte}.

\begin{lemma}
    Let~$(u, v), (w, x)$ be two critical edges of~$G^\star$ and let~$(u', v'), (w', x')$ be their limits in~$G^+$.
    Further, let~$\Gamma^\star$ be a drawing of~$G^\star$, let~$\Gamma^+$ be the drawing of~$G^+$ induced by~$\Gamma^\star$ and let~$\varphi^+$ be the truth assignment of~$\mathcal S(G^+)$ induced by~$\Gamma^+$.
    Then~$(u, v)$ and~$(w, x)$ intersect an even number of times in~$\Gamma^\star$ if and only if~$\varphi^+(u'w') = \varphi^+(v'x')$.
    \label{lem:consistent-limits}
\end{lemma}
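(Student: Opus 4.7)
The plan is to track the horizontal order of the two edges level-by-level in the proper subdivision~$G^+$, and to combine this with the elementary fact that between two consecutive levels the parity of the number of crossings of two $y$-monotone curves equals the parity of the swap in their horizontal order.

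First I would observe that~$\Gamma^+$ is obtained from~$\Gamma^\star$ merely by placing additional subdivision points on the existing curves, so any crossing between the edges~$(u,v)$ and~$(w,x)$ in~$\Gamma^\star$ corresponds bijectively to a crossing in~$\Gamma^+$ between a sub-edge of the~$(u,v)$-path and a sub-edge of the~$(w,x)$-path.  Since every sub-edge in~$G^+$ spans exactly one level strip~$\{(x,y) : i \le y \le i+1\}$ and edges in a level drawing are~$y$-monotone, such crossings can only occur between sub-edges sharing the same strip.  By the definition of the limits, the shared strips are precisely those with~$\ell(u') \le i < \ell(v')$; outside this range only one of the two paths exists.

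Next, for each~$i \in \{\ell(u'), \ldots, \ell(v')\}$, let~$p_i$ (resp.~$q_i$) denote the subdivision vertex or endpoint of the~$(u,v)$-path (resp.~$(w,x)$-path) at level~$i$, so that~$p_{\ell(u')} = u'$,~$q_{\ell(u')} = w'$,~$p_{\ell(v')} = v'$, and~$q_{\ell(v')} = x'$.  The key step is to establish, for every~$i$ with~$\ell(u') \le i < \ell(v')$, that the sub-edges~$(p_i,p_{i+1})$ and~$(q_i,q_{i+1})$ cross an odd number of times if and only if~$\varphi^+(p_iq_i) \neq \varphi^+(p_{i+1}q_{i+1})$.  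This I would prove by a horizontal sweep: since both sub-edges are monotone and contained in the strip~$i \le y \le i+1$, each horizontal line in the strip meets each sub-edge in exactly one point, and the left-to-right order of these two points flips precisely at each crossing of the sub-edges.

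Finally, summing the parities over~$i = \ell(u'), \ldots, \ell(v')-1$ telescopes: the total parity of crossings between~$(u,v)$ and~$(w,x)$ in~$\Gamma^\star$ is even if and only if the relative order of~$p_i$ and~$q_i$ at~$i = \ell(u')$ coincides with that at~$i = \ell(v')$, which by definition of~$\varphi^+$ is precisely the condition~$\varphi^+(u'w') = \varphi^+(v'x')$.  The main technical obstacle is the per-strip sweep argument, which depends on the general-position assumption from the preliminaries (no two vertices at the same point, no three edges meeting at a point) together with~$y$-monotonicity of the sub-edges in order to guarantee that crossings correspond bijectively to transpositions in the horizontal order.
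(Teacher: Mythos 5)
Your proposal is correct and takes essentially the same approach as the paper's proof: both track the relative horizontal order of the two edge-paths level by level in~$G^+$, observe that the order flips exactly when the sub-edges cross within a strip, note that crossings are confined to the levels between the limits, and conclude by a parity/telescoping argument. The only cosmetic difference is that the paper normalizes to at most one crossing per strip by inserting sublevels, while you handle the per-strip parity directly via the horizontal sweep.
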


\begin{proof}
    We may assume without loss of generality that any two edges cross at most once between consecutive levels by introducing sublevels if necessary.
    Let~$X$ be a crossing between~$(u, v)$ and~$(w, x)$ in~$G^\star$; see Fig.~\ref{fig:hanani-tutte-implies-satisfiability}~(a).
    Further, let~$u_1, w_1$ and~$u_2, w_2$ be the subdivision vertices of~$(u, v)$ and~$(w, x)$ on the levels directly below and above~$X$ in~$G^\star$, respectively.
    It is~$\varphi^+(u_1w_1) = \neg \varphi^+(u_2w_2)$.
    In the reverse direction,~$\varphi^+(u_1w_1) = \neg \varphi^+(u_2w_2)$ implies that~$(u, v)$ and~$(w, x)$ cross between the levels~$\ell(u_1)$ and~$\ell(u_2)$.
    Due to the definition of limits, any crossing between~$(u, v)$ and~$(w, x)$ in~$G^\star$ must occur between the levels~$\ell(u') = \ell(w')$ and~$\ell(v') = \ell(x')$.
    Therefore, it is~$\varphi^+(u'w') = \varphi^+(v'x')$ if and only if~$(u, v)$ and~$(w, x)$ cross an even number of times.
\end{proof}

\begin{figure}[t]
    \centering
    \includegraphics{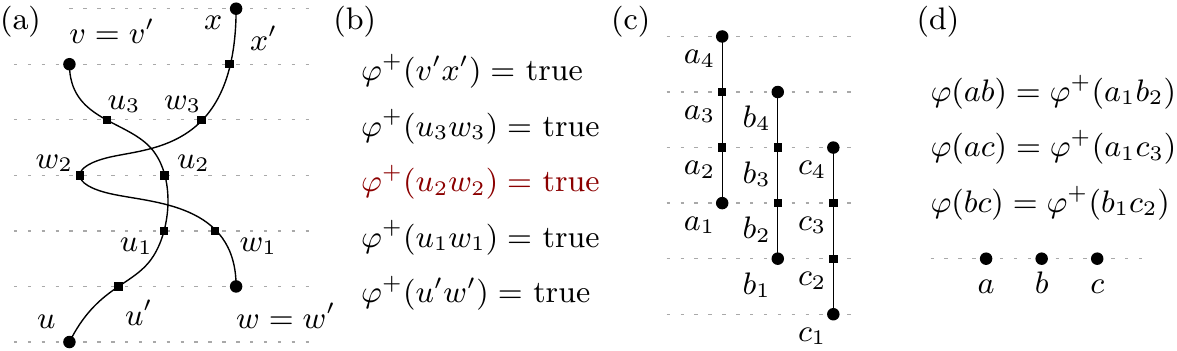}
    \caption{
        A Hanani-Tutte drawing~(a) induces a truth assignment~$\varphi^+$
        that satisfies~$\mathcal S(G^+)$~(b), the value where $\phi^+$ differs
        from $\psi^+$ is highlighted in red.
        Using the subdivided stretch edges of~$G^+$~(c), translate~$\varphi^+$ to a satisfying assignment~$\varphi$ of~$\mathcal S(G)$~(d).
    }
    \label{fig:hanani-tutte-implies-satisfiability}
\end{figure}
\begin{lemma}
    Let~$G$ be a proper level graph and let~$\Gamma^\star$ be a Hanani-Tutte drawing of~$G^\star$.
    Then~$\mathcal S(G)$ is satisfiable.
    \label{lem:hanani-tutte-implies-satisfiability}
\end{lemma}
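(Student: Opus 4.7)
The plan is to let $\varphi^+$ be the truth assignment of $\mathcal{S}(G^+)$ induced by the drawing $\Gamma^+$ of $G^+$ obtained from $\Gamma^\star$ by reading off the subdivisions, and then distil from $\varphi^+$ a satisfying assignment $\varphi$ of $\mathcal{S}(G)$ by picking, for each unordered pair $\{v_j,v_k\}$ on a common level $i$ of $G$ with $j<k$, a canonical sublevel of $G^+$ at which to read off the left-right order of $v_j$ and $v_k$. My canonical choice is the upper limit $(n_i+j)^i$ of the two stretch edges $e(v_j)$ and $e(v_k)$: at that sublevel $v_j$ is represented by its stretch endpoint $v_j''$ while $v_k$ is still a subdivision vertex of $e(v_k)$, which I denote $\tilde v_k$. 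Setting $\varphi(v_j v_k):=\varphi^+(v_j''\,\tilde v_k)$ makes the consistency clauses of $\mathcal{S}(G)$ immediate from consistency of $\varphi^+$.

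The core task is to verify the planarity clauses. Fix independent edges $(p,q),(r,s)\in E(G)$ with $p,r$ on level $i$ and $q,s$ on level $i+1$, and write $j_p,j_r,j_q,j_s$ for the indices of $p,r,q,s$ on their respective levels. Assume without loss of generality $j_p<j_r$. Unwinding the definition, $\varphi(pr)$ is $\varphi^+(p''\,\tilde r)$ evaluated at sublevel $(n_i+j_p)^i$, whereas $\varphi(qs)$ is evaluated at sublevel $(n_{i+1}+\min(j_q,j_s))^{i+1}$. I would bridge the two by a chain of $\varphi^+$-value equalities, each step obtained by applying Lemma~\ref{lem:consistent-limits} and the Hanani-Tutte hypothesis on $\Gamma^\star$ to an independent pair of critical edges of $G^\star$. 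In the subcase $j_q<j_s$ the chain hops, in order, through the pairs $(e^\star_{pq},e(r))$, $(e^\star_{pq},e^\star_{rs})$, $(e(q),e^\star_{rs})$, and $(e(q),e(s))$, producing
\[
\varphi(pr)=\varphi^+(p''\,\tilde r)=\varphi^+(\hat p\,r'')=\varphi^+(q'\,\hat r)=\varphi^+(\tilde q\,s')=\varphi^+(q''\,\tilde s)=\varphi(qs),
\]
where each tilde or hat denotes the appropriate subdivision vertex at the limit sublevel under consideration at that step.

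In the subcase $j_q>j_s$ the same strategy works after replacing the third pair by $(e^\star_{pq},e(s))$; the chain then ends at $\varphi^+(\tilde q\,s'')$, which coincides with $\varphi(qs)=\neg\varphi(sq)=\neg\varphi^+(s''\,\tilde q)$ by consistency of $\varphi^+$. The case $j_p>j_r$ is handled by symmetric relabelling. The main obstacle is purely book-keeping: at every hop I must verify that the two edges invoked are independent in $G^\star$ and critical in the sense of Section~\ref{sec:preliminaries} so that Lemma~\ref{lem:consistent-limits} applies, and I must identify which vertex --- endpoint or subdivision --- represents each $G$-vertex at each relevant sublevel. Once the canonical sublevel $(n_i+j)^i$ is fixed no further combinatorial insight is required.
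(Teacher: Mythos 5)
Your proof is correct, and it takes a recognizably different route from the paper's. The paper's argument has an intermediate stage: it first \emph{repairs} the drawing-induced assignment $\psi^+$ on $G^+$ into an assignment $\varphi^+$ that satisfies all of $\mathcal S(G^+)$ (propagating values from limit levels for independent subdivision-pairs, and from the shared endpoint for adjacent ones), and only then reads $\varphi$ off the limits of the stretch edges, deducing the planarity clauses for $\varphi$ from the planarity clauses of $\varphi^+$ along the disjoint directed paths in $G^+$. You skip this intermediate satisfying assignment entirely: you work directly with the raw drawing-induced assignment $\psi^+$ (which you also call $\varphi^+$, a slight abuse since it need not satisfy the planarity clauses of $\mathcal S(G^+)$), define $\varphi$ by reading at the upper limit of two stretch edges, and prove the planarity clause for $\varphi$ by an explicit four-hop chain of applications of Lemma~\ref{lem:consistent-limits} through the critical pairs $(e^\star_{pq},e(r))$, $(e^\star_{pq},e^\star_{rs})$, $(e(q),e^\star_{rs})$, $(e(q),e(s))$. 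I checked each hop: the pairs are indeed independent and critical in $G^\star$ in both subcases $j_q<j_s$ and $j_q>j_s$, the limit levels you invoke line up with the vertices you claim, and the final identification with $\varphi(qs)$ (via the consistency clause when $j_q>j_s$) is sound. What the paper's extra step buys is a reusable object ($\varphi^+$ as a bona fide satisfying assignment of $\mathcal S(G^+)$) and a shorter punchline for the planarity clauses; what your version buys is economy — no case analysis for adjacent edges or for vertex-vs-subdivision pairs on $G^+$ is needed, only the chain — at the cost of the explicit (but verifiable) bookkeeping you flag at the end.
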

\begin{proof}
    Let~$\Gamma^+$ be the drawing of~$G^+$ induced by~$\Gamma^\star$ and
    let~$\psi^+$ denote the truth assignment induced by~$\Gamma^+$.
    Note that~$\psi^+$ does not necessarily satisfy the crossing clauses.
    Define~$\varphi^+$ so that it satisfies all clauses of~$\mathcal S(G^+)$ as follows.
  
    Let $u'',w''$ be two vertices of $G^+$ with $\ell(u'')=\ell(w'')$. 
    If one of them is a vertex in $G^\star$, then set
    $\phi^+(u'',w'')=\psi^+(u'',w'')$.
    Otherwise $u''$, $w''$ are subdivision vertices of two edges
$(u,v),(w,x)\in E(G^\star)$. 
    If they are independent, then they are critical. In that case their limits
$(u',v'),(w',x')$ are already assigned consistently by
Lemma~\ref{lem:consistent-limits}.  
    Then set~$\varphi^+(u''w'') = \psi^+(u'w')$.
    If $(u,v)$, $(w,x)$ are adjacent, then we have $u=w$ or $v=x$.  %
    In the first case, we set~$\varphi^+(u''w'') = \psi^+(v'x')$. In the
second case, we set $\varphi^+(u''w'') = \psi^+(u'w')$.

    Thereby, we have for any critical pair of edges $(u'',v''),(w'',x'')\in
E(G^+)$ that $\varphi^+(u''w'')= \varphi^+(v''x'')$ and clearly~$\varphi^+(u''w'')=\neg\varphi^+(w''u'')$. 
    Hence, assignment~$\varphi^+$ satisfies~$\mathcal S(G^+)$.
    See Fig.~\ref{fig:hanani-tutte-implies-satisfiability} for a drawing~$\Gamma^+$ (a) and the satisfying assignment of~$\mathcal S(G^+)$ derived from it (b).

    Proceed to construct a satisfying truth assignment~$\varphi$ of~$\mathcal S(G)$ as follows.
    Let~$u$, $w$ be two vertices of~$G$ with~$\ell(u) = \ell(w)$.
    Then the stretch edges~$e(u), e(w)$ in~$G^\star$ are critical by construction.
    Let~$(u', u''), (w', w'')$ be their limits in~$G^+$.
    Set~$\varphi(uw) = \varphi^+(u'w')$.
    Because~$\varphi^+$ is a satisfying assignment, all crossing clauses of~$\mathcal S(G^+)$ are satisfied, which implies~$\varphi^+(u'w') = \varphi^+(u''w'')$.
    The same is true for all subdivision vertices of~$e(u)$ and~$e(w)$ in~$G^+$.
    Because~$\varphi^+$ also satisfies the consistency clauses of~$\mathcal S(G^+)$, this means that~$\varphi$ satisfies the consistency clauses of~$\mathcal S(G)$.
    See Fig.~\ref{fig:hanani-tutte-implies-satisfiability} for how~$\mathcal S(G^+)$ is translated from~$G^+$ (c) to~$G$ (d).
    Note that the resulting assignment is not necessarily transitive, e.g., it could be~$\varphi(uv) = \varphi(vw) = \neg \varphi(uw)$.

    Consider two edges~$(u, v)$, $(w, x)$ in~$G$ with~$\ell(u) = \ell(w)$.
    Because $G$ is proper, we do not have to consider other pairs of edges.
    Let~$(u', u''), (w', w'')$ be the limits of~$e(u), e(w)$ in~$G^+$.
    Further, let~$(v', v''), (x', x'')$ be the limits of~$e(v), e(x)$ in~$G^+$.
    Because there are disjoint directed paths from~$u'$ and~$w'$ to~$v'$ and~$x'$ and~$\varphi^+$ is a satisfying assignment, it is~$\varphi^+(u'w') = \varphi^+(v'x')$.
    Due to the construction of~$\varphi$ described in the previous paragraph, this means that it is~$\varphi(uw) = \varphi(vx)$.
    Therefore,~$\varphi$ is a satisfying assignment of~$\mathcal S(G)$.
\end{proof}
\begin{lemma}
    Let~$G$ be a proper level graph together with a satisfying truth assignment~$\varphi$ of~$\mathcal S(G)$.
    Then there exists a Hanani-Tutte drawing~$\Gamma^\star$ of~$G^\star$.
    \label{lem:satisfiability-implies-hanani-tutte}
\end{lemma}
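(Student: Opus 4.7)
The plan is to reverse the construction in Lemma~\ref{lem:hanani-tutte-implies-satisfiability}, again using $\mathcal S(G^+)$ as common ground. Given $\varphi \models \mathcal S(G)$, I would first lift $\varphi$ to an assignment $\varphi^+$ of $\mathcal V(G^+)$, then realize $\varphi^+$ as a drawing $\Gamma^\star$ of $G^\star$ whose induced assignment (via $\Gamma^+$) agrees with $\varphi^+$ on the limit pairs of every critical pair of $G^\star$. Lemma~\ref{lem:consistent-limits} then yields the Hanani-Tutte property.

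To lift $\varphi$, associate to each vertex $u''$ of $G^+$ at a sublevel of original level $i$ the unique vertex of $G$ at level $i$ it belongs to: a vertex of $G^\star$ or a subdivision of the stretch edge $e(v)$ is associated with $v$; a subdivision of the slanted edge $e^\star$ for an edge $e = (a, b) \in E(G)$ is associated with $a$ (resp.\ $b$) if it sits at a sublevel of level $\ell(a)$ (resp.\ $\ell(b)$). For two distinct vertices $u'', w''$ at the same sublevel with associated vertices $v_1, v_2$, set $\varphi^+(u''w'') := \varphi(v_1 v_2)$ if $v_1 \neq v_2$, and pick a fixed internal order within each bundle otherwise. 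A case analysis on critical pairs $(u,v), (w,x)$ of $G^\star$ with limits $(u', v'), (w', x')$ in $G^+$ then verifies $\varphi^+(u' w') = \varphi^+(v' x')$: when the two critical edges lie within a single original level of $G$ (two stretch edges, or a stretch edge and a slanted edge), both sides reduce to the same value of $\varphi$ on the same pair of $G$-vertices; in the remaining case of two slanted edges $e_1^\star, e_2^\star$ with $e_1 = (a_1, b_1), e_2 = (a_2, b_2) \in E(G)$ spanning adjacent levels of $G$, the equation reduces to $\varphi(a_1 a_2) = \varphi(b_1 b_2)$, which is the planarity clause of $\mathcal S(G)$ for the independent edges $e_1, e_2$.

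Next, realize $\varphi^+$ geometrically by choosing, at each sublevel of $G^+$, a total order on the vertices present that respects $\varphi^+$ on every inter-bundle pair arising as a limit of a critical pair in $G^\star$. As in the proof of Lemma~\ref{lem:hanani-tutte-implies-satisfiability}, any two same-sublevel vertices of $G^+$ in the same bundle lie on $G^\star$-edges that coincide or share an endpoint, so no critical pair has both its limits in the same bundle; within-bundle orderings can therefore be chosen freely. Connect the chosen positions by monotone curves to produce $\Gamma^\star$. For each critical pair of $G^\star$ with limits $(u', v'), (w', x')$, the induced assignment $\psi^+$ of $\Gamma^+$ agrees with $\varphi^+$ on $(u', w')$ and $(v', x')$ by construction; combining this with the equality established above and Lemma~\ref{lem:consistent-limits} yields even crossings.

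The main obstacle is the realization step: because $\varphi$ is not assumed transitive, the induced relation $\varphi^+$ on the bundles present at a single sublevel need not admit any linear extension, so a total order of bundles that globally agrees with $\varphi$ may not exist. The argument must exploit the fact that only those inter-bundle comparisons arising as limits of critical pairs really matter, and that any remaining inconsistencies can be absorbed into swaps between adjacent (hence non-critical) edges of $G^\star$; establishing this reconciliation is the crux of the proof.
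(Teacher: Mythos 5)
Your plan is structurally parallel to the paper's (lift $\varphi$ to a satisfying assignment $\varphi^+$ of $\mathcal S(G^+)$, realize $\varphi^+$ as a drawing, then invoke Lemma~\ref{lem:consistent-limits}), and the verification that the lifted $\varphi^+$ satisfies the planarity clauses is along the right lines. But you stop short of the realization step, which you yourself flag as ``the crux of the proof,'' and you leave it unresolved. That is a genuine gap, and your bundle-based framing in fact obscures the observation that makes the realization easy.

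The missing insight is the one the paper leans on: by construction, \emph{every level of $G^+$ contains exactly one non-subdivision vertex}, namely the single vertex of $G^\star$ on that sublevel. For a critical pair $(u,v),(w,x)$ of $G^\star$, its limit level is $\max(\ell(u),\ell(w))$; since $G^\star$ has one vertex per level and the edges are independent, $\ell(u)\ne\ell(w)$, so exactly one of the two limit endpoints $u',w'$ is this non-subdivision vertex and the other is a subdivision vertex (and symmetrically at the top limit level). Consequently one never needs to ``linearly order the bundles'' at all: it suffices to place each subdivision vertex on the left or the right of the single anchor vertex according to $\varphi^+$, with the internal order among subdivision vertices on each side chosen arbitrarily. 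Every comparison that matters for a critical pair is of the form (anchor, subdivision vertex), and these are exactly the comparisons respected by this placement. The non-transitivity of $\varphi$, which you correctly worry about, simply never bites --- there is no need to extend $\varphi^+$ to a total order across distinct subdivision vertices. Because of this, the reconciliation you describe (``absorbing inconsistencies into swaps between adjacent edges'') is never needed, and a proof that attempted it would be solving a harder problem than necessary.

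So: same overall skeleton as the paper, but the realization step --- which is the whole point of the statement --- is missing, and the path you sketch toward it (ordering bundles, then repairing inconsistencies) does not obviously close. Replace it with the one-anchor-per-level observation and the rest of your argument goes through.
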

\begin{proof}
    We construct a satisfying truth assignment~$\varphi^+$ of~$\mathcal S(G^+)$ from~$\varphi$ by essentially reversing the process described in the proof of Lemma~\ref{lem:hanani-tutte-implies-satisfiability}.
    Proceed to construct a drawing~$\Gamma^+$ of~$G^+$ from~$\varphi^+$ as follows.
    Recall that by construction, every level of~$G^+$ consists of exactly one non-subdivision vertex.
    Let~$u$ denote the non-subdivision vertex of level~$i$.
    Draw a subdivision vertex~$w$ on level~$i$ to the right of~$u$ if~$\varphi^+(uw)$ is true and to the left of~$u$ otherwise.
    The relative order of subdivision vertices on either side of~$u$ can be chosen arbitrarily.
    Let~$\Gamma^\star$ be the drawing of~$G^\star$ induced by~$\Gamma^+$.
    To see that~$\Gamma^\star$ is a Hanani-Tutte drawing, consider two critical edges~$(u, v), (w, x)$ of~$G^\star$.
    Let~$(u', v'), (w', x')$ denote their limits in~$G^+$.
    One vertex of~$u'$ and~$v'$ ($w'$ and~$x'$) is a subdivision vertex and the other one is not.
    Lemma~\ref{lem:consistent-limits} gives~$\varphi^+(u'w') = \varphi^+(v'x')$ and then by construction~$u', w'$ and~$v', x'$ are placed consistently on their respective levels.
    Moreover, Lemma~\ref{lem:consistent-limits} yields that~$(u, v)$ and~$(w, x)$ cross an even number of times in~$G^\star$.
\end{proof}

\begin{figure}[t]
    \centering
    \includegraphics{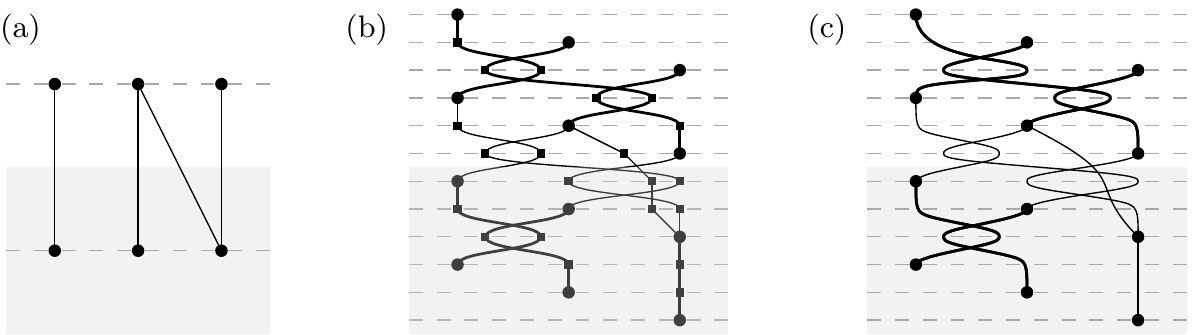}
    \caption{
        A proper level graph~$G$ together with a satisfying variable assignment~$\varphi$ (a) induces a drawing of~$G^+$ (b), which induces a Hanani-Tutte drawing of~$G^\star$ (c).
    }
    \label{fig:satisfiability-implies-hanani-tutte}
\end{figure}

\begin{theorem}
    Let $G$ be a proper level graph.
    Then

    \smallskip
    \noindent\centering
    $\mathcal S(G)$ is satisfiable $\iff G^\star$ has a Hanani-Tutte level drawing $\iff G$ is level planar.
    \label{thm:hanani-tutte-iff-satisfiability}
\end{theorem}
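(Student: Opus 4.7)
The plan is to chain together the two lemmas just proved with the normalization lemma and the strong Hanani--Tutte theorem for level planarity of Fulek et al.~\cite{fpss-htmdalp-13}. The equivalence between the first two statements, \emph{$\mathcal S(G)$ is satisfiable} and \emph{$G^\star$ has a Hanani--Tutte level drawing}, is essentially free: Lemma~\ref{lem:hanani-tutte-implies-satisfiability} gives one direction and Lemma~\ref{lem:satisfiability-implies-hanani-tutte} gives the other. Neither direction invokes level planarity of $G$ itself, so this part of the statement is established entirely within the constraint-system/drawing framework developed in this section.

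For the equivalence with \emph{$G$ is level planar}, I would argue as a cycle. First, if $G$ is level planar, then Lemma~\ref{lem:normalization-lemma} gives that $G^\star$ is level planar as well; any level planar drawing of $G^\star$ has no crossings whatsoever and is therefore trivially a Hanani--Tutte drawing. This closes the forward implication without needing any deep result. For the reverse direction, suppose $G^\star$ admits a Hanani--Tutte drawing. Here I invoke the strong Hanani--Tutte theorem for level planarity of Fulek et al.~\cite{fpss-htmdalp-13} to conclude that $G^\star$ is in fact level planar, and then Lemma~\ref{lem:normalization-lemma} transports level planarity back from $G^\star$ to~$G$.

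Stringing these steps together yields the three-way equivalence: $\mathcal S(G)$ satisfiable $\Rightarrow G^\star$ has a Hanani--Tutte drawing (Lemma~\ref{lem:satisfiability-implies-hanani-tutte}) $\Rightarrow G^\star$ is level planar (Hanani--Tutte theorem) $\Rightarrow G$ is level planar (Lemma~\ref{lem:normalization-lemma}) $\Rightarrow G^\star$ is level planar (Lemma~\ref{lem:normalization-lemma}) $\Rightarrow G^\star$ has a Hanani--Tutte drawing (trivially) $\Rightarrow \mathcal S(G)$ satisfiable (Lemma~\ref{lem:hanani-tutte-implies-satisfiability}).

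There is no real obstacle here—the two lemmas are precisely the two nontrivial directions, and the only external input is the strong Hanani--Tutte theorem used in exactly one implication. The main interpretive point, which is worth flagging in the write-up, is that this chain simultaneously reproves the Randerath et al.\ result (``dropping transitivity preserves satisfiability'') assuming Hanani--Tutte, and conversely would let one derive Hanani--Tutte from Randerath et al., which is the equivalence of the two results advertised in the introduction.
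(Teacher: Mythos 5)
Your proposal is correct and takes essentially the same route as the paper. The paper leaves this theorem's proof implicit, but the surrounding text makes clear that the intended argument is exactly yours: Lemmas~\ref{lem:hanani-tutte-implies-satisfiability} and~\ref{lem:satisfiability-implies-hanani-tutte} give the first equivalence directly (without passing through level planarity, which is the point of the whole construction), and the second equivalence is closed by invoking one of the two deep external results (the strong Hanani--Tutte theorem of Fulek et al.\ in one direction, with Lemma~\ref{lem:normalization-lemma} and the triviality of a crossing-free drawing being a Hanani--Tutte drawing in the other). Your closing remark about the two-way interpretive content is exactly the equivalence advertised in the paper's introduction.
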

\section{Radial Level Planarity}
\label{sec:radial}

In this section we present an analogous construction for radial level
planarity.  In contrast to level planarity, we now have to consider
cyclic orders on the levels, and even those may still leave some
freedom for drawing the edges between adjacent levels.  In the
following we first construct a constraint system of radial level
planarity for a proper level graph~$G$, which is inspired by the one
of Randerath et al.  Afterwards, we slightly modify the construction
of the modified graph~$G^\star$.  Finally, in analogy to the level planar
case, we show that a satisfying assignment of our constraint
system defines a satisfying assignment of the constraint system
of~$G^+$, and that this in turn corresponds to a Hanani-Tutte radial level drawing
of~$G^\star$.

\myparagraph{A Constraint System for Radial Level Planarity}
We start with a special case that bears a strong similarity with the
level-planar case.  Namely, assume that~$G$ is a proper level graph that contains a directed path
$P = \alpha_1,\dots,\alpha_k$ that has exactly one
vertex~$\alpha_i$ on each level~$i$.  We now express the cyclic
ordering on each level as linear orders whose first vertex is
$\alpha_i$.  To this end, we introduce for each level the
variables~$\mathcal V_i = \{\alpha_i uv \mid u,v \in V_i \setminus
\{\alpha_i\} \}$, where $\alpha_iuv\equiv \text{true}$ means $\alpha_i$, $u$,
$v$ are arranged clockwise on the circle representing level $i$.
We further impose the following necessary and sufficient \emph{linear
  ordering constraints}~$\mathcal L_G(\alpha_i)$.
\begin{alignat}{6}
    & \forall \text{          distinct } && u, v    &&\in V \setminus \{\alpha_i\}: \quad \alpha_iuv                  &&\iff \neg && \alpha_ivu     \label{eq:radial-constraint-consistency}  \\
    & \forall \text{ pairwise distinct } && u, v, w &&\in V \setminus \{\alpha_i\}: \quad \alpha_iuv \land \alpha_ivw &&\implies \neg && \alpha_iuw \label{eq:radial-constraint-transitivity}
\end{alignat}
It remains to constrain the cyclic orderings of vertices on adjacent
levels so that the edges between them can be drawn without crossings.
For two adjacent levels~$i$ and~$i+1$,
let~$\eps_i = (\alpha_i,\alpha_{i+1})$ be the \emph{reference edge}.
Let~$E_i$ be the set of edges~$(u,v)$ of~$G$ with~$\ell(u) = i$ that
are not adjacent to an endpoint of~$\eps_i$.  Further~$E_i^+ = \{(\alpha_i,v) \in E \setminus \{\eps_i\} \}$
and~$E_{i}^- = \{(u,\alpha_{i+1}) \in E \setminus \{\eps_i\}\}$
denote the edges between levels~$i$ and~$i+1$ adjacent to the
reference edge~$\eps_i$.

In the context of the constraint formulation, we only consider drawings of the edges between levels~$i$ and~$i+1$ where any pair of edges crosses at most once and, moreover,~$\eps_i$ is not crossed.
Note that this can always be achieved, independently of the orderings chosen for levels~$i$ and~$i+1$.
Then, the cyclic orderings of the vertices on the levels~$i$ and~$i+1$ determine the drawings of all edges in~$E_i$.
In particular, two edges~$(u,v)$, $(u',v') \in E_i$ do not intersect if and only if~$\alpha_iuu' \Leftrightarrow \alpha_{i+1}vv'$; see Fig.~\ref{fig:radial-sat}~(a).
Therefore, we introduce constraint~\eqref{eq:radial-constraint-planarity-one}.
For each edge~$e$ in~$E_i^+ \cup E_i^-$ it remains to decide whether it is embedded locally to the left or to the right of~$\eps_i$.
We write~$\l(e)$ in the former case.
For any two edges~$e \in E_i^-$ and~$f \in E_i^+$ we have that~$e$ and~$f$ do not cross if and only if~$\l(e) \Leftrightarrow \neg \l(f)$; see Fig.~\ref{fig:radial-sat}~(b).
This gives us constraint~\eqref{eq:radial-constraint-planarity-two}.
It remains to forbid crossings between edges in~$E_i$ and edges in~$E_i^+ \cup E_i^-$.
An edge~$e = (\alpha_i,v'') \in E_i^+$ and an edge~$(u',v') \in E_i$ do not cross if and only if~$\l(e) \Leftrightarrow \alpha_{i+1}v'v''$; see Fig.~\ref{fig:radial-sat}~(c).
Crossings with edges~$(v,\alpha_{i+1}) \in E_i^-$ can be treated analogously.
This yields constraints~\eqref{eq:radial-constraint-planarity-three} and~\ref{eq:radial-constraint-planarity-four}.
We denote the planarity
constraints~\eqref{eq:radial-constraint-planarity-one}--\eqref{eq:radial-constraint-planarity-four} by~$\mathcal P_G(\eps_i)$, where~$\eps_i = (\alpha_i,\alpha_{i+1})$.
\begin{alignat}{5}%
    & \forall \text{ independent } (u, v), (u', v') \in E_i                       &&:\quad \alpha_iuu'            &&\iff      \alpha_{i+1}vv'  \label{eq:radial-constraint-planarity-one}   \\
    & \forall                      e \in E_i^+, f \in E_i^-                       &&:\quad l(e)                     &&\iff \neg \l(f)              \label{eq:radial-constraint-planarity-two}   \\
    & \forall \text{ independent } (\alpha_i,v'') \in E_i^+, (u,v) \in E_i      &&:\quad \l(\alpha_i,v'')     &&\iff      \alpha_{i+1}vv'' \label{eq:radial-constraint-planarity-three} \\
    & \forall \text{ independent } (u'', \alpha_{i+1}) \in E_i^-, (u,v) \in E_i &&:\quad \l(u'',\alpha_{i+1}) &&\iff      \alpha_{i}uu''   \label{eq:radial-constraint-planarity-four}
\end{alignat}%
\begin{figure}[t]%
    \centering
    \includegraphics[page=2]{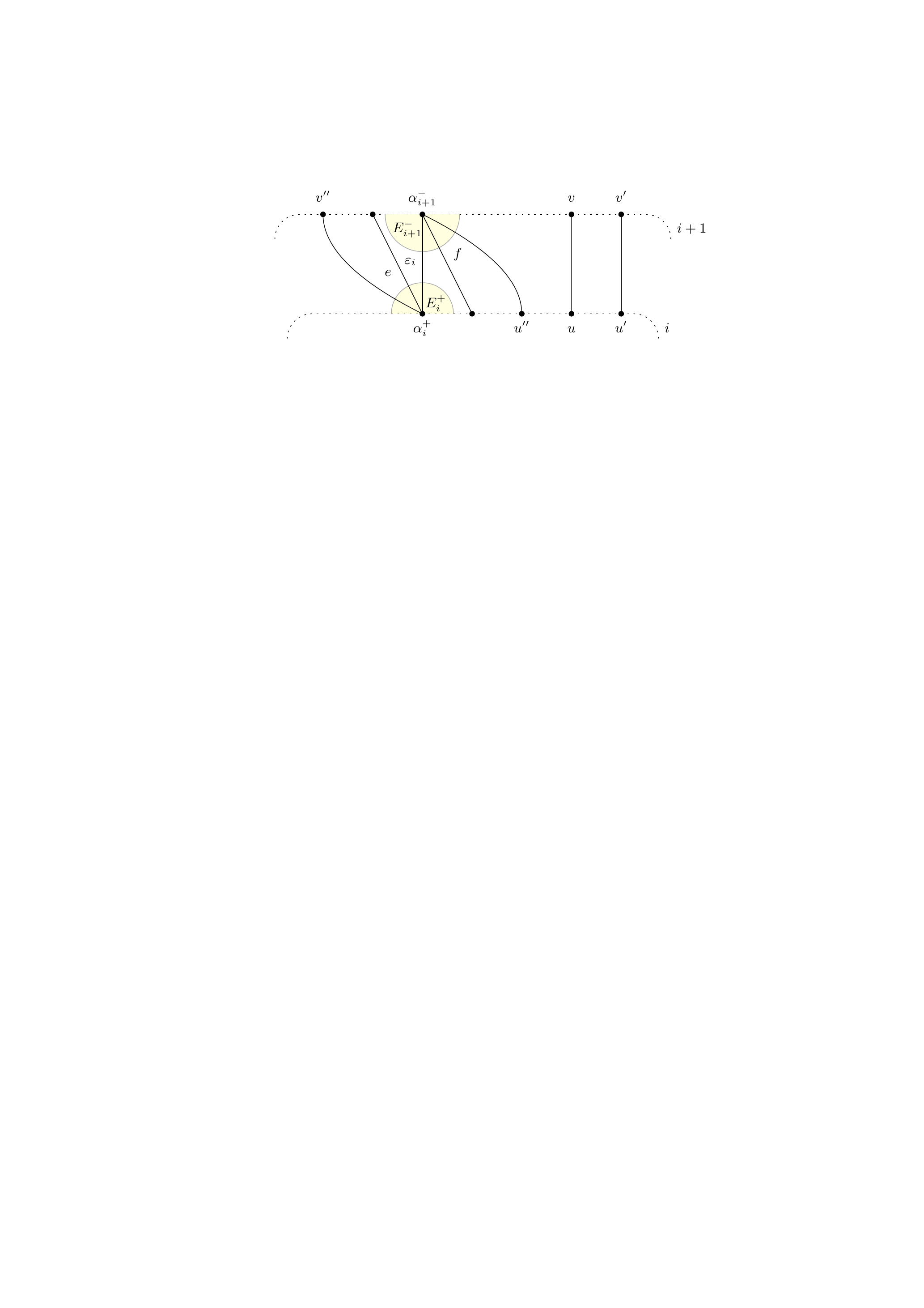}
    \caption{
        Illustration of the planarity constraints for radial planarity for the
case of two edges in~$E_i$ (a),
constraint~\eqref{eq:radial-constraint-planarity-one}; the case of an edge in~$e
\in E_i^-$ and an edge~$f \in E_i^+$ (b),
constraint~\eqref{eq:radial-constraint-planarity-two}; and the case of an edge
in~$E_i$ and an edge~$e \in E_i^+$ (c), constraint~\eqref{eq:radial-constraint-planarity-three}.
     }
     \label{fig:radial-sat}
\end{figure}%
It is not difficult to see that the transformation between
Hanani-Tutte drawings and solutions of the constraint system without
the transitivity constraints~\eqref{eq:radial-constraint-transitivity} can be performed as in
the previous section.  The only difference is that one has to deal
with edges that share an endpoint with a reference~$\eps_i$.

In general, however, such a path~$P$ from level~$1$ to level~$k$ does
not necessarily exist.  Instead, we use an arbitrary reference edge between any
two consecutive levels. More formally, we call a pair of sets $A^+=\{\alpha_1^+,\dots, \alpha_k^+\}$,
$A^-=\{\alpha_1^-,\dots, \alpha_k^-\}$ \emph{reference sets} for $G$ if we have
$\alpha_1^-=\alpha_1^+$ and $\alpha_{k}^+=\alpha_k^-$ and for $1\le i \le k$ the \emph{reference vertices} $\alpha_i^+$,
$\alpha_i^-$ lie on level $i$ and for $1\le i < k$ graph $G$ contains the
\emph{reference edge} $\eps_i=(\alpha_i^+,\alpha_{i+1}^-)$ unless there is no
edge between level $i$ and level $i+1$ at all. 
In that case, we can extend
every radial drawing of $G$ by the edge $(\alpha_i^+,\alpha_{i+1}^-)$ without
creating
new crossings. We may therefore assume that this case does not occur and we do
so from now on.

To express radial level planarity, we express the cyclic orderings on
each level twice, once with respect to the reference
vertex~$\alpha_i^+$ and once with respect to the reference
vertex~$\alpha_i^-$.  To express planarity between adjacent levels,
we use the planarity constraints with respect to the reference
edge~$\eps_i$.  It only remains to specify that,
if~$\alpha_i^+ \ne \alpha_i^-$, the linear ordering with respect to
these reference vertices must be linearizations of the same cyclic
ordering.  This is expressed by the following \emph{cyclic ordering
  constraints}~$\mathcal C_G(\alpha_i^+, \alpha_i^-)$.
\begin{alignat}{4}
    & \forall \text{ distinct } u, && \, v \in V_i \setminus \{\alpha_i^-, \alpha_i^+\}: \enspace & (\alpha_i^-uv \iff \alpha_i^+uv) &\iff (\alpha_i^-u\alpha_i^+ \iff \alpha_i^-v\alpha_i^+) \label{eq:radial-constraint-merge-one} \\
    & \forall                      && \, v \in V_i \setminus \{\alpha_i^-, \alpha_i^+\}: \enspace &            \alpha_i^-v\alpha_i^+ &\iff \alpha_i^+\alpha_i^-v                              \label{eq:radial-constraint-merge-two}
\end{alignat}
 The constraint
set~$\mathcal S'(G,A^+,A^-)$ consists of the linearization
constraints~$\mathcal L_G(\alpha_i^+)$ and~$\mathcal L_G(\alpha_i^-)$
and the cyclic ordering
constraints~$\mathcal C_G(\alpha_i^+, \alpha_i^-)$
for~$i = 1, 2, \ldots, k$ if~$\alpha_i^+ \ne \alpha_i^-$, plus the
planarity constraints~$\mathcal P_G(\eps_i)$
for~$i = 1, 2, \ldots, k - 1$.  This completes the definition of our
constraint system, and we proceed to show its correctness.

\begin{restatable}[$\star$]{theorem}{radialconstraints}
	\label{the:radialconstraints}
  Let $G$ be a proper level graph with reference sets $A^+$, $A^-$. 
  Then the constraint system $\mathcal S'(G,A^+,A^-)$ is satisfiable if and
  only if~$G$ is radial level planar.  Moreover, the radial level
  planar drawings of~$G$ correspond bijectively to the satisfying
  assignments of~$\mathcal S'(G,A^+,A^-)$.
\end{restatable}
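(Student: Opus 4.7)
The plan is to set up a bijection between combinatorial radial level planar drawings of~$G$ and satisfying assignments of~$\mathcal S'(G,A^+,A^-)$, where a combinatorial radial drawing is specified by the cyclic vertex order on every level together with, for each edge incident to a reference vertex, the side of~$\eps_i$ on which the edge leaves.

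First I would treat the forward direction. Given a radial level planar drawing~$\Gamma$, define $\varphi(\alpha_i^\pm uv)$ to be true iff $\alpha_i^\pm,u,v$ appear clockwise on the circle representing level~$i$, and $\varphi(\l(e))$ to be true iff~$e$ leaves its reference endpoint to the left of~$\eps_i$. The constraints~$\mathcal L_G(\alpha_i^\pm)$ are satisfied since a clockwise ordering rooted at a fixed vertex is a strict linear order; the cyclic ordering constraints~$\mathcal C_G(\alpha_i^+,\alpha_i^-)$ are satisfied because both linearizations come from the same underlying cyclic order on level~$i$; and each planarity constraint in~$\mathcal P_G(\eps_i)$ is satisfied because, as illustrated in Fig.~\ref{fig:radial-sat}, it was designed to characterize exactly when the two indicated edges avoid each other between adjacent levels, given that~$\eps_i$ is uncrossed.

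For the converse, a satisfying assignment~$\varphi$ yields by~\eqref{eq:radial-constraint-consistency}--\eqref{eq:radial-constraint-transitivity} a strict linear order on~$V_i\setminus\{\alpha_i^+\}$ and one on~$V_i\setminus\{\alpha_i^-\}$. The cyclic ordering constraints then force these to be linearizations of a single cyclic order on~$V_i$: \eqref{eq:radial-constraint-merge-two} pins down where $\alpha_i^-$ sits in the $\alpha_i^+$-linearization, and \eqref{eq:radial-constraint-merge-one} says that two non-reference vertices have the same relative order in both linearizations iff they lie on the same arc delimited by~$\alpha_i^+$ and~$\alpha_i^-$---exactly the condition for the two linearizations to correspond to a common cyclic order. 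I then build a drawing level by level: place the vertices of each level on a concentric circle in the prescribed cyclic order, route each $\eps_i$ uncrossed, put each edge of $E_i^+\cup E_i^-$ on the side of~$\eps_i$ indicated by~$\l$, and connect the endpoints in~$E_i$ by curves that pairwise cross at most once. The planarity constraints~\eqref{eq:radial-constraint-planarity-one}--\eqref{eq:radial-constraint-planarity-four} give exactly the combinatorial conditions guaranteeing that this routing is crossing-free between every two consecutive levels.

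The two maps are inverse essentially by construction, since both the assignment and the combinatorial drawing are fully determined by the cyclic orders on the levels together with the side labels~$\l(e)$, which establishes the claimed bijection. The main obstacle I expect is the cyclic-ordering step: one must verify carefully that constraints~\eqref{eq:radial-constraint-merge-one} and~\eqref{eq:radial-constraint-merge-two} really do glue two a priori independent linear orders into linearizations of one and the same cyclic order, leaving no extra degrees of freedom and allowing no spurious satisfying assignments. A minor additional subtlety is the boundary cases $\alpha_i^+=\alpha_i^-$, where no cyclic constraints are imposed, and $i\in\{1,k\}$, where the identifications $\alpha_1^+=\alpha_1^-$ and $\alpha_k^+=\alpha_k^-$ still need to be checked for compatibility with the drawing.
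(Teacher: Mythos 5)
Your plan follows the same line as the paper's proof: the forward direction is routine, and the reverse direction hinges on showing that the cyclic ordering constraints~\eqref{eq:radial-constraint-merge-one} and~\eqref{eq:radial-constraint-merge-two} force the two $\alpha_i^\pm$-rooted linear orders to be linearizations of one common cyclic order, after which the planarity constraints assemble a crossing-free drawing between consecutive levels. You correctly identify that gluing step as the crux, but you only flag it rather than prove it; the paper discharges it by partitioning $V_i\setminus\{\alpha_i^-,\alpha_i^+\}$ into $\mathbb A = \{v : \alpha_i^- v \alpha_i^+\}$ and $\mathbb B = \{v : \alpha_i^- \alpha_i^+ v\}$ and running a case analysis over which of $u,v,w$ lie in $\mathbb A$, in $\mathbb B$, or coincide with a reference vertex, deducing $(u,v,w)\in\sigma_i^-\Rightarrow(u,v,w)\in\sigma_i^+$ in each case from~\eqref{eq:radial-constraint-merge-one} and~\eqref{eq:radial-constraint-merge-two}. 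To turn your plan into a proof, that case analysis is the main piece still to be written out.
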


\noindent
Similar to Section~\ref{sec:level-planarity}, we now define a
reduced constraints system~$\mathcal S(G,A^+,A^-)$ obtained
from~$\mathcal S'(G, A^+, A^-)$ by dropping
constraint~\eqref{eq:radial-constraint-transitivity}.  Observe that this
reduced system can be represented as a system of linear equations
over~$\mathbb F_2$, which can be solved efficiently.  Our main result
is that~$S(G,A^+,A^-)$ is satisfiable if and only if~$G$ is radial
level planar.

\myparagraph{Modified Star Form}
We also slightly modify the splitting and perturbation operation in the construction of the star form $G^\star$ of $G$ for each level $i$.
This is necessary since we need a special treatment of the reference vertices~$\alpha_i^+$ and~$\alpha_i^-$ on each level $i$.
Consider the level $i$ containing the $n_i$ vertices $v_1,\dots,v_{n_i}$.
\begin{figure}[tb]
  \centering
  \includegraphics{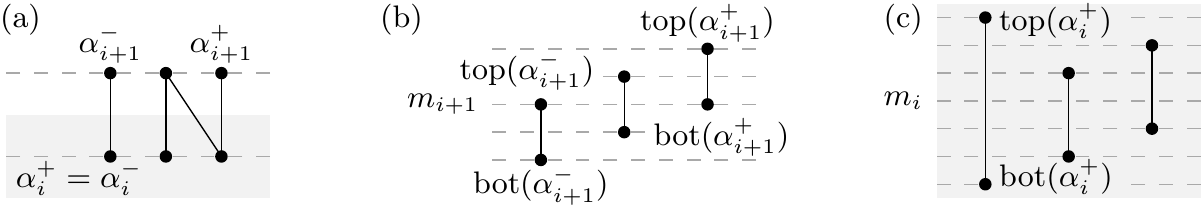}
  \caption{Illustration of the modified construction of the stretch
    edges for $G^\star$ for the graph $G$ in (a). The stretch edges
    for level $i+1$ where~$\alpha_{i+1}^+ \ne \alpha_{i+1}^-$ (b) and
    for level $i$ where~$\alpha_i^+ = \alpha_i^-$ (c).}
  \label{fig:modified-g-star}
\end{figure}
If~$\alpha_i^+ \ne \alpha_i^-$, then we choose the numbering of the
vertices such that~$v_1 = \alpha_i^-$ and~$v_{n_i} = \alpha_i^+$.  We
replace~$i$ by~$2n_{i}-1$ levels~$1^i,2^i,\dots,(2n_i-1)^i$, which is
one level less than previously.  Similar to before, we replace each
vertex~$v_j$ by two vertices~$\bots(v_j)$ and~$\tops(v_j)$
with~$\ell(\bots(v_j)) = j^i$ and~$\ell(\tops(v_j)) = (n_i - 1 + j)^i$
and the corresponding stretch edge~$(\bots(v_j),\tops(v_j))$; see
Fig.~\ref{fig:modified-g-star}~(b).  This ensures that the construction
works as before, except that the middle level~$m_i = j^{n_i}$ contains
two vertices, namely~$\alpha_i^{+}{}''$ and~$\alpha_i^{-}{}'$.

If, on the other hand,~$\alpha_i^+ = \alpha_i^-$, then we choose~$v_1 = \alpha_i^+$.  But now we replace level~$i$ by~$2n_i+1$
levels~$1^i,\dots,(2n_i+1)^i$.  Replace~$v_1$ by vertices~$\bots(v_1),\tops(v_1)$ with~$\ell(\bots(v_1)) = 1^i$
and~$\ell(\tops(v_1)) = (2n_i+1)^j$.
Replace all other~$v_j$ with vertices~$\bots(v_j),\tops(v_j)$ with~$\ell(\bots(v_j)) = j^i$ and~$\ell(\tops(v_j)) = (n_i+1+j)^i$.
For all~$j$, we add the stretch
edge~$(\bots(v_j),\tops(v_j))$ as before; see
Fig.~\ref{fig:modified-g-star}~(c).  This construction ensures that
the stretch edge of~$\alpha_i^+=\alpha_i^-$ starts in the
first new level~$1^i$ and ends in the last new level~$(2n_i+1)^i$, and
the middle level~$m_i = {n_i+1}^j$ contains no vertex.

As before, we replace each original edge~$(u,v)$ of the input graph~$G$ by the edge~$(\tops(u),\bots(v))$ connecting the upper endpoint of the stretch edge of~$u$ to the lower endpoint of the stretch edge of~$v$.  Observe that the construction preserves the properties that for
each level~$i$ the middle level~$m_i$ of the levels that replace~$i$
intersects all stretch edges of vertices on level~$i$.  Therefore,
Lemma~\ref{lem:normalization-lemma} also holds for this modified
version of~$G^\star$ and its proper subdivision~$G^+$.  For each
vertex~$v$ of~$G$ we use~$e(v) = (\bots(v),\tops(v))$ to denote its
stretch edge.

We define the function~$L$ that maps each level $j$
of~$G^\star$ or~$G^+$ to the level $i$ of~$G$ it replaces.  For
an edge $e$ of~$G^\star$ and a level $i$ that intersects~$e$, we
denote by~$e_i$ the subdivision vertex of~$e$ at level $i$ in~$G^+$.
For two levels $i$ and~$j$ that both intersect an edge $e$
of~$G^\star$, we denote by~$e_i^j$ the path from~$e_i$ to~$e_j$
in~$G^+$.

\myparagraph{Constraint System and Assignment for $\boldsymbol{G^+}$}
We now choose reference sets $B^+$, $B^-$ for~$G^+$ that are based on the
reference sets $A^+$, $A^-$ for~$G$.
Consider a level~$j$ of~$G^\star$ and let~$i=L(j)$ be the corresponding level of~$G$.
For each level~$j$, define two vertices~$\beta_j^+$, $\beta_j^-$. If~$\alpha_i^-=\alpha_i^+$, set~$\beta_j^-=\beta_j^+=e(\alpha_i^-)_j$; see Fig.~\ref{fig:assignment-gplus}~(b).
Otherwise, the choice is based on whether~$j$ is the middle level~$m=m_i$ of the levels~$L^{-1}(i)$ that replace level~$i$ of~$G$, or whether~$j$ lies above or below~$m$.
Choose~$\beta_m^- = \tops(\alpha_i^+)$ and~$\beta_m^+ = \bots(\alpha_{i+1}^-)$.
For~$j<m$,  choose~$\beta_j^-=\beta_j^+=e(\alpha_i^-)_j$ and for~$j>m$, choose~$\beta_j^-=\beta_j^+ = e(\alpha_i^+)_j$; see Fig.~\ref{fig:assignment-gplus}~(c).

\begin{figure}[t]
    \centering
    \includegraphics{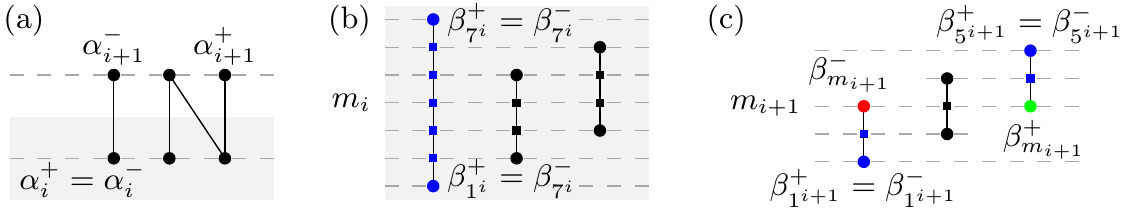}
    \caption{
        Definition of $\beta^+, \beta^-$ in the assignment for $G^+$ for the same graph as in Fig.~\ref{fig:modified-g-star}~(a).
        Vertices $\beta^+$ ($\beta^-$) are drawn in green (red), or in blue if they coincide.
    }
    \label{fig:assignment-gplus}
\end{figure}

We set $B^+$ to be the set containing all $\beta_j^+$ and likewise for~$B^-$.  Our next step is to construct from a satisfying
assignment~$\varphi$ of~$\mathcal S(G,A^+,A^-)$ a corresponding
satisfying assignment~$\varphi^+$ of~$\mathcal S(G^+,B^+,B^-)$.
The
construction follows the approach from Lemma~\ref{lem:satisfiability-implies-hanani-tutte} and makes use of
the fact that $G^+$ is essentially a stretched and perturbed
version of $G$.  Since the construction is straightforward but
somewhat technical, we defer it to Appendix~\ref{sec:phiplus}.

\begin{restatable}[$\star$]{lemma}{existencephiplus}
  If $\mathcal S(G,A^+,A^-)$ is satisfiable, then~$\mathcal S(G^+,B^+,B^-)$ is
satisfiable.
\end{restatable}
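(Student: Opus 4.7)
The plan is to mirror the construction in Lemma~\ref{lem:satisfiability-implies-hanani-tutte} and lift $\varphi$ directly to $G^+$ through the stretch-and-perturb operation that produces $G^\star$ and $G^+$ from $G$. The key observation is that every vertex $u''$ of $G^+$ on a level $j$ admits a natural projection $\pi(u'')\in V_{L(j)}$ to a vertex of $G$: a subdivision vertex of a stretch edge $e(v)$ projects to $v$; a subdivision vertex of a (subdivided) original edge of $G^\star$ projects to one of its endpoints of $G$ depending on which side of the middle level $m_{L(j)}$ it lies on; and by the choice of $B^+,B^-$, every reference vertex $\beta_j^\circ$ projects to $\alpha_{L(j)}^\circ$.

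First I would define $\varphi^+$ on the linearization variables by copying values through this projection: for distinct $u'',w''$ on level $j$ of $G^+$, set $\varphi^+(\beta_j^\circ u''w'')=\varphi(\alpha_{L(j)}^\circ\,\pi(u'')\,\pi(w''))$ whenever $\pi(u'')\neq\pi(w'')$; the residual degenerate pairs only occur at a middle level and can be resolved consistently without coupling to any non-trivial planarity constraint. Second, for the $\l$-variables: every edge of $G^+$ adjacent to a reference edge $\eps_j$ of $G^+$ but not equal to it is a piece of an edge of $G^\star$ that is adjacent to the reference edge $\eps_{L(j)}$ of $G$, and I would set $\varphi^+(\l(\cdot))$ to the value of $\l$ assigned by $\varphi$ to that edge of $G$.

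Then I would verify all constraints of $\mathcal S(G^+,B^+,B^-)$. Consistency~\eqref{eq:radial-constraint-consistency} is immediate since $\pi$ is essentially a bijection on each level of $G^+$. The cyclic-ordering constraints $\mathcal C_{G^+}(\beta_j^+,\beta_j^-)$, i.e.\ \eqref{eq:radial-constraint-merge-one}--\eqref{eq:radial-constraint-merge-two}, are non-trivial only when $\beta_j^+\neq\beta_j^-$, which by construction of $B^+,B^-$ happens exactly at a middle level $m_i$ with $\alpha_i^+\neq\alpha_i^-$; there they pull back through $\pi$ to $\mathcal C_G(\alpha_i^+,\alpha_i^-)$ that $\varphi$ already satisfies. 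For the planarity constraints $\mathcal P_{G^+}(\eps_j)$, i.e.\ \eqref{eq:radial-constraint-planarity-one}--\eqref{eq:radial-constraint-planarity-four}, pairs of adjacent levels strictly inside a block $L^{-1}(i)$ carry only stretch edges and edges incident to the common reference vertex of $G^+$, which makes the constraints trivial; pairs straddling two consecutive blocks $L^{-1}(i)$ and $L^{-1}(i{+}1)$ carry exactly the edges of $E_i$, $E_i^+$, $E_i^-$ of $G$ and translate through $\pi$ directly into $\mathcal P_G(\eps_i)$ satisfied by $\varphi$.

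The hard part will be the pairs of adjacent levels surrounding the middle level $m_i$ of a block with $\alpha_i^+\neq\alpha_i^-$, because there the reference edges of $G^+$ are not inherited from any single reference edge of $G$ but rather built from the two distinct reference vertices $\alpha_i^+$ and $\alpha_i^-$ on the same level of $G$. Reducing the planarity constraints at these boundaries to statements about $\varphi$ requires converting between the $\alpha_i^+$- and $\alpha_i^-$-linearizations, and this is precisely what the merge constraints~\eqref{eq:radial-constraint-merge-one}--\eqref{eq:radial-constraint-merge-two} enforce under $\varphi$. Carrying out this bookkeeping, together with a careful case analysis of which edges of $G^+$ appear in $E_j$, $E_j^+$, and $E_j^-$ at these middle-level boundaries, is the main technical content and the reason this construction is deferred to Appendix~\ref{sec:phiplus}.
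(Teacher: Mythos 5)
Your projection $\pi$ is essentially the paper's mapping $O\colon V(G^+)\to V$, and the overall plan---pull $\varphi$ back along this map, and pull the $\l$-values back along the corresponding edge identification---is the same as the paper's. But your treatment of the degenerate case is wrong in two ways, and these are precisely where the technical content lives. First, the collisions $\pi(u'')=\pi(w'')$ do \emph{not} occur at the middle levels; the paper proves that $m_i$ is crossed only by stretch edges, so $O$ is \emph{injective} on level $m_i$. Collisions occur at the other sublevels inside a block: whenever two original edges of $G$ share a source $u$ (resp.\ target $v$), the subdivision vertices of their $G^\star$-images on any sublevel of $L^{-1}(\ell(u))$ strictly above $\tops(u)$ (resp.\ of $L^{-1}(\ell(v))$ strictly below $\bots(v)$) all project to $u$ (resp.\ $v$). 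In particular, your claim that adjacent-level pairs strictly inside a block ``carry only stretch edges and edges incident to the common reference vertex'' is false: subdivision paths of ordinary $G^\star$-edges pass through these levels too.

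Second, and consequently, the collision values cannot be ``resolved consistently without coupling to any non-trivial planarity constraint.'' The planarity constraints $\mathcal P_{G^+}(\delta_j)$ for consecutive levels $j,j+1$ inside a block propagate the value $\varphi^+(\beta_j^\pm\,y\,z)$ along the whole subdivision path, all the way to the block boundary $(2n_i-1)^i / 1^{i+1}$, where it must line up with $\mathcal P_G(\eps_i)$. The paper's resolution is the explicit function $\psi(\eps_i,e,f)$, which reads off the correct value from $\varphi$'s linearization variables on the far endpoints or from the $\l$-variables when one of the far endpoints is a reference vertex; its well-definedness uses exactly that $\varphi$ satisfies $\mathcal P_G(\eps_i)$. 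Your proof as written would assign these values arbitrarily and then be unable to discharge Eq.~\eqref{eq:radial-constraint-planarity-one}--\eqref{eq:radial-constraint-planarity-four} at the block boundaries. Conversely, the place you flag as the ``hard part''---the middle level $m_i$ when $\alpha_i^+\ne\alpha_i^-$---is handled by a clean pullback to $\mathcal C_G(\alpha_i^+,\alpha_i^-)$ precisely because $O$ is injective there; it is not the crux.
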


\myparagraph{Constructing a Hanani-Tutte Drawing}
We construct a radial drawing~$\Gamma^+$ of~$G^+$, from which we
obtain the drawing~$\Gamma^\star$ of~$G^\star$ by smoothing the subdivision
vertices.  Afterwards we show that~$\Gamma^\star$ is a Hanani-Tutte drawing.

We construct~$\Gamma^+$ as follows.  Consider a level~$j$ of~$G^+$ and
let~$i=L(j)$ be the original level of~$G$.  First assume~$j=m_i$ is
the middle levels of the levels replacing level~$i$ of~$G$.  If~$\beta_j^-=\beta_j^+$, then we place all vertices of~$V_j(G^+)$ in
arbitrary order.  Otherwise, we place $\beta_j^-$ and $\beta_j^+$ arbitrarily on the
circle representing the level~$m_i$.  We then place each vertex~$v\in V_j(G^+)\setminus\{\beta_j^-,\beta_j^+\}$ such that~$\beta_j^-,v,\beta_j^+$ are ordered clockwise if and only if~$\phi(\beta_j^-v\beta_j^+)$ is true (i.e., we place~$v$ on the correct
side of~$\beta_j^-$ and~$\beta_j^+$ and arrange the vertices on both
sides of~$\beta_j^-$ and~$\beta_j^+$ arbitrarily).

Next assume~$j\ne m_i$.  Then there is exactly one vertex~$\xi\in V_j(G^+)\cap V(G^\star)$.  If~$\xi\in B^-$, then we place all
vertices of~$V_j(G^+)$ in arbitrary order on the circle representing
the level~$j$.  Otherwise, we place~$\beta_j^-$ and~$\xi$ arbitrarily.
We then place any vertex~$v\in V_j(G^+)\setminus\{\beta_j^-,\xi\}$
such that~$\beta_j^-,\xi,v$ are ordered clockwise if and only if~$\phi^+(\beta_j^-\xi v)$ is true.  Again, we arrange the vertices on
either side of~$\beta_j^-$ and~$\xi$ arbitrarily.  We have now fixed
the positions of all vertices and it remains to draw the edges.

Consider two consecutive levels~$j$ and~$j+1$ of~$G^+$.  We draw the
edges in~$E_j(G^+)$ such that they do not cross the reference edges in~$E(G^+)\cap(B^+\times B^-)$.  We draw an edge~$e=(\beta_j^+,x')\in E_j^+(G^+)$ such that it is locally left of~$(\beta_j^+,\beta_j^-)$ if and only if~$\phi^+(\l(e))=\text{true}$.
By reversing the subdivisions of the edges in~$G^+$ we obtain~$G^\star$ and
along with that we obtain a drawing~$\Gamma^\star$ of~$G^\star$ from~$\Gamma^+$.

Let $a,b,c$ be curves or corresponding edges. Then we write $\cross(a,b)$ for the
number of crossings between $a,b$ and set $\cross(a,b,c)=\cross(a,b)+\cross(a,c)+\cross(b,c)$.
The following lemma is the radial equivalent to
Lemma~\ref{lem:consistent-limits} and constitutes our main tool for
showing that edges in our drawing cross evenly.

\begin{restatable}[$\star$]{lemma}{threecurves}
  \label{lem:equalOrderImpliesEvenCrossNumberOf3Edges}%
  Let $C_1$ and $C_2$ be distinct concenctric circles and let $a,b,c$
  be radially monotone curves from $C_1$ to $C_2$ with pairwise
  distinct start- and endpoints that only intersect at discrete
  points.  Then the start- and endpoints of $a,b,c$ have the same
  order on $C_1$ and $C_2$ if and only if
  $\cross(a,b,c)\equiv 0 \mod 2$.
\end{restatable}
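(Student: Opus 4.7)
The plan is to track the cyclic order of the three curves around the common center as the radial coordinate grows from $C_1$ to $C_2$, and to show that each pairwise crossing flips this cyclic order exactly once.

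Because $a,b,c$ are radially monotone, each is represented by a continuous angular function $\theta_a,\theta_b,\theta_c\colon [r_1,r_2]\to \mathbb{R}/2\pi\mathbb{Z}$, where $r_1<r_2$ are the radii of $C_1$ and $C_2$. After a small perturbation that does not alter $\cross(a,b,c)\bmod 2$, I may assume that every intersection is a transverse crossing of exactly two curves and that no two crossings occur at the same radius. For every radius $r$ that is not a crossing radius the three angular values are pairwise distinct and hence determine a cyclic order $\sigma(r)\in\{\mathrm{CW},\mathrm{CCW}\}$ on the labels $\{a,b,c\}$.

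Two observations would then finish the proof. First, on every open radial interval containing no crossing, $\sigma$ is constant, since the $\theta_{\bullet}$ are continuous and never coincide pairwise on such an interval. Second, at a transverse crossing of, say, $a$ and $b$ at radius $\rho$, the sign of $\theta_a-\theta_b$ reverses while $\theta_c$ remains bounded away from the common value $\theta_a(\rho)=\theta_b(\rho)$; a direct inspection of the three angular positions on the circle shows that the labels $a$ and $b$ swap their position relative to $c$, so $\sigma$ flips. Extending $\sigma$ by continuity to $r_1$ and $r_2$, where the distinct start- and endpoint hypotheses give a well-defined cyclic order, the cyclic orders on $C_1$ and $C_2$ agree if and only if the total number of flips, namely $\cross(a,b,c)$, is even.

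The main technical nuisance is the general-position reduction: tangencies and coinciding crossing radii must be eliminated without changing $\cross(a,b,c)\bmod 2$. This is standard — a tangency is resolved by a tiny perturbation of one curve and produces $0$ or $2$ transverse crossings, and two crossings sharing a radius can be separated by a small radial shift of one of the involved curves — but it is worth stating cleanly before invoking it.
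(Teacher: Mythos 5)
Your proposal is correct and follows essentially the same route as the paper's proof: order the crossings by radius (after a general-position perturbation), observe that the cyclic order of the three curves is constant between consecutive crossings and flips at each one, and conclude that the orders on $C_1$ and $C_2$ agree exactly when the total crossing count $\cross(a,b,c)$ is even. Your version is just a more explicit write-up (angular functions, the transposition argument for why a two-curve crossing reverses the cyclic order of all three) of the paper's terser circle-insertion argument.
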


\begin{lemma}
  The drawing~$\Gamma^\star$ is a Hanani-Tutte drawing of $G^\star$.
\end{lemma}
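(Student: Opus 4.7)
My plan is to show that any two independent edges of $G^\star$ cross an even number of times in $\Gamma^\star$ by a strip-by-strip analysis in $G^+$. Since non-critical pairs of independent edges cannot cross at all in a radial drawing, I would focus on a pair of critical independent edges $(u,v),(w,x)$ of $G^\star$. All their crossings in $\Gamma^\star$ occur inside the concentric annuli bounded by consecutive levels of $G^+$, so I would fix such a strip bounded by levels $j$ and $j+1$ and write $a$, $b$ for the subpaths of $(u,v)$, $(w,x)$ inside the annulus and $\eps_j=(\beta_j^+,\beta_{j+1}^-)$ for the reference edge between these two levels.

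The key observation I would exploit is that the construction of $\Gamma^+$ draws every edge of $G^+$ without crossing any reference edge, so $\cross(a,\eps_j)=\cross(b,\eps_j)=0$ and hence $\cross(a,b)\equiv\cross(a,b,\eps_j)\pmod 2$. Provided the six start- and endpoints of $a,b,\eps_j$ on the two bounding circles are pairwise distinct, Lemma~\ref{lem:equalOrderImpliesEvenCrossNumberOf3Edges} translates the parity of $\cross(a,b,\eps_j)$ into the condition that the three points on the lower circle appear in the same cyclic order as their counterparts on the upper circle. I would then read off this ordering condition directly from the planarity constraint~\eqref{eq:radial-constraint-planarity-one} of $\mathcal S(G^+,B^+,B^-)$, which $\varphi^+$ satisfies by the preceding existence lemma; hence $a$ and $b$ cross evenly inside the strip, and summing these even per-strip contributions proves the claim.

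The main obstacle will be the case where one of $a,b$ shares a start- or endpoint with $\eps_j$, so that Lemma~\ref{lem:equalOrderImpliesEvenCrossNumberOf3Edges} does not literally apply. In this case the edge in question lies in $E_j^+\cup E_j^-$, and its side with respect to $\eps_j$ is encoded by the variable $\l(\cdot)$ that $\varphi^+$ assigns. I would replace the abstract ordering argument by a short case distinction using the planarity constraints~\eqref{eq:radial-constraint-planarity-two}--\eqref{eq:radial-constraint-planarity-four}: the drawing rule for $\Gamma^+$ places the adjacent edge on exactly the side prescribed by $\l(\cdot)$, and the constraints then force the two curves to avoid each other inside the annulus, yielding zero crossings, which is trivially even. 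Finally, for strips straddling a middle level with $\beta_{m_i}^+\ne\beta_{m_i}^-$ the two bounding circles carry different reference vertices, but the cyclic-ordering constraints~\eqref{eq:radial-constraint-merge-one}--\eqref{eq:radial-constraint-merge-two} satisfied by $\varphi^+$ guarantee that the two linear orders describe the same cyclic order, so the strip-by-strip argument goes through unchanged and $(u,v)$ and $(w,x)$ cross an even number of times in $\Gamma^\star$.
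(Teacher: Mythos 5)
Your plan rests on the claim that inside each annulus bounded by consecutive levels $j$ and $j{+}1$ the two subpaths $a,b$ of the critical edges cross an even (in fact, zero) number of times, and that this can be read off from the planarity constraint~\eqref{eq:radial-constraint-planarity-one} that $\varphi^+$ satisfies. This is where the argument breaks. The drawing $\Gamma^+$ is \emph{not} a faithful realization of the full assignment $\varphi^+$ on each level: on a non-middle level $j$ the only information used to place a subdivision vertex $v$ is the single truth value $\varphi^+(\beta_j^-\xi v)$, which decides which of the two arcs bounded by $\beta_j^-$ and the unique $G^\star$-vertex $\xi$ the point $v$ lands in; the order of vertices \emph{within} an arc is chosen arbitrarily (and on levels with $\xi\in B^-$ the order is completely arbitrary). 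Consequently the actual clockwise order of $\beta_j^+, e_j, f_j$ on the circle need not agree with the value $\varphi^+(\beta_j^+ e_j f_j)$ when $e_j$ and $f_j$ fall in the same arc. So you cannot deduce from $\varphi^+$ satisfying constraint~\eqref{eq:radial-constraint-planarity-one} that the drawn cyclic orders at levels $j$ and $j{+}1$ match; they may disagree and then $a,b$ cross once, an odd number, in that strip. The same issue undermines your handling of the degenerate case with constraints~\eqref{eq:radial-constraint-planarity-two}--\eqref{eq:radial-constraint-planarity-four} and the middle-level case: only the $\l(\cdot)$ variable and the $\beta_j^-,\xi$-relative positions are honoured by $\Gamma^+$, not general triples.

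What is true (and what the paper exploits) is a \emph{global} parity statement over the whole span of levels, not a per-strip one. The paper introduces an auxiliary monotone curve $\gamma$ built from the reference edges together with arcs $c_j$ between $\beta_j^-$ and $\beta_j^+$ on each level, applies Lemma~\ref{lem:equalOrderImpliesEvenCrossNumberOf3Edges} once to $e,f,\gamma$ on the full span from level $a$ to level $b$, and then inductively propagates the value of $\varphi^+(\beta_j^+ e_j f_j)$ from $j=a$ to $j=b$ using $\mathcal P(\delta_j)$ and $\mathcal C(\beta_j^+,\beta_j^-)$, tying the number of sign flips to the parity of $\cross(e,\gamma)+\cross(f,\gamma)$ rather than trying to pin down crossings strip by strip. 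Intuitively, swapping two subdivision vertices inside one arc toggles a crossing in the strip below and in the strip above, so while individual strips are not controlled, the total parity is an invariant. A separate argument (your degenerate case, the paper's Case~2) is needed when one of $e,f$ starts or ends at a reference vertex: there the paper passes to the induced subgraph on the three relevant paths, on which the transitivity constraints are vacuous, and invokes Theorem~\ref{the:radialconstraints} to get a planar drawing of that subgraph, then transfers the resulting crossing-parity information back to $\Gamma^+$ via Lemma~\ref{lem:equalOrderImpliesEvenCrossNumberOf3Edges}. To repair your proof you would need to replace the per-strip even-crossing claim by the paper's global argument, or else change the construction of $\Gamma^+$ so that it actually realizes all of $\varphi^+$ — but $\varphi^+$ need not be transitive, so no such drawing exists in general.
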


\begin{proof}
  We show that each pair of {}independent{} edges of~$G^\star$ crosses evenly 
  in~$\Gamma^\star$.  Of course it suffices to consider critical pairs of
  edges, since our drawing is radial by construction, and therefore
  non-critical independent edge pairs cannot cross. 

  Every edge $(\alpha_i^+,\alpha_{i+1}^-)$ is subdivided into edges of
  the form $(\beta_j^+,\beta_{j+1}^-)$ and therefore it is not
  crossed.

  Let $e,f$ be two independent edges in
  $E(G^\star)\setminus (A^+\times A^-)$ that are critical. Let $a$ and
  $b$ be the innermost and outermost level shared by $e$ and~$f$.
	
  We seek to use
  Lemma~\ref{lem:equalOrderImpliesEvenCrossNumberOf3Edges} to analyze
  the parity of the crossings between $e$ and~$f$.  To this end, we
  construct a curve $\gamma$ along the edges of the form
  $(\beta_j^+,\beta_{j+1}^-)$ as follows. For every level $j$ we add a
  curve $c_j$ between $\beta_j^-$ and $\beta_j^+$ on the circle
  representing the level $j$ (a point for $\beta_j^-=\beta_j^+$; chosen
arbitrarily otherwise).  The
  curve $\gamma$ is the union of these curves $c_j$ and the curves for
  the edges of the form $(\beta_j^+,\beta_{j+1}^-)$.  Note that $\gamma$
  spans from the innermost level $1$ to the outermost level $(2n_k+1)^k$ with endpoints
$\bots(\alpha_1^+)$ and $\tops(\alpha_k^-)$.

  For any edge $g\in G^\star$, we denote its curve in $\Gamma^\star$ by
  $c(g)$.  For any radial monotone curve $c$ we denote its subcurve
  between level $i$ and level $j$ by $c_i^j$ (using only one point on
  circle $i$ and circle $j$ each).  We consider the three curves
  $g'=\gamma_a^b,e'=c(e)_a^b,f'=c(f)_a^b$.  We now distinguish cases
  based on whether one of the edges $e,f$ starts at the bottom end or
  ends at the top end of the reference edges on level $a$ or $b$.

  \noindent\textbf{Case 1:} We have $e_a,f_a\ne \beta_a^+$ and $e_b,f_b\ne \beta_b^-$.
  Note that
  $\cross(e,f,\gamma)=\cross(e,f)+\cross(e,\gamma)+\cross(f,\gamma)$,
  and
  therefore~$\cross(e,f) \equiv \cross(e,f,\gamma)+\cross(e,\gamma) +
  \cross(f,\gamma) \mod 2$.
  
  By Lemma~\ref{lem:equalOrderImpliesEvenCrossNumberOf3Edges} we have
  that the orders of $e_a,f_a,\beta_a^+$ and $e_b,f_b,\beta_b^-$
  differ if and only if $\cross(e,f,\gamma) \equiv 1 \mod 2$.  That
  is~$\cross(e,f,\gamma) \equiv 0 \mod 2$ if and only
  if~$\phi^+(\beta_a^+,e_a,f_a) = \phi^+(\beta_b^-,e_b,f_b)$.  We show
  that $\phi^+(\beta_a^+,e_a,f_a) = \phi^+(\beta_b^-,e_b,f_b)$ if and
  only if $\cross(e,\gamma)+\cross(f,\gamma)\equiv 0\mod 2$.  In
  either case,~$\cross(e,f)$ is even.

  Let $a\le j\le b-1$. By construction we have for
  $\beta_j^-\ne\beta_j^+$ and any other vertex $v$ on level $j$, that
  $\beta_j^-,v,\beta_j^+$ are placed clockwise if and only if
  $\phi^+(\beta_j^-,v,\beta_j^+)$ is true.  Further, since $\phi^+$
  satisfies $\mathcal C(\beta_j^+,\beta_j^-)$, we have for any other
  vertex $u$ on level $j$ that $\beta_j^-,u,v$ and $\beta_j^+,u,v$
  have the same order if and only if $\beta_j^-,v,\beta_j^+$ and
  $\beta_j^-,u,\beta_j^+$ have the same order, i.e., if and only if
  $u$ and $v$ lie on the same side of $\beta_j^-$ and
  $\beta_j^+$. This however, is equivalent to
  $\cross(e,c_j)+\cross(f,c_j)\equiv 0\mod 2$.
		
  Since $\phi^+$ satisfies $\mathcal P(\delta_j)$ where $\delta_j = (\beta_j^+,\beta_{j+1}^-)$, we
  have that
  $\phi^+(\beta_j^+,e_j,f_j)=\phi^+(\beta_{j+1}^-,e_{j+1},f_{j+1})$.
  We obtain, that
  $\phi^+(\beta_j^+,e_j,f_j)=\phi^+(\beta_{j+1}^-,e_{j+1},f_{j+1})$
  unless
  $\phi^+(\beta_{j+1}^-,e_{j+1}f_{j+1})\ne
  \phi^+(\beta_{j+1}^+e_{j+1}f_{j+1})$ (which requires
  $\beta_{j+1}^-\ne \beta_{j+1}^+$). This is equivalent to
  $\cross(e,c_{j+1})+\cross(f,c_{j+1})\equiv 1\mod 2$.  Hence, we have
  $\phi^+(\beta_a^+e_af_a)=\phi^+(\beta_b^+e_bf_b)$ if and only if 
  $\sum_{j=a}^{b-1} \cross(c_j,e)+\cross(c_j,f)\equiv 0\mod 2$ (Note
  that $\beta_b^-=\beta_b^+$. ).  Since edges of the form
  $(\beta_j^+,\beta_{j+1}^-)$ are not crossed, this is equivalent to
  $\cross(\gamma,e)+\cross(\gamma,f) \equiv 0 \mod 2$.  Which we aimed
  to show.  By the above argument we therefore find that~$\cross(e,f)$
  is even.

  \noindent\textbf{Case 2.}  We do not have $e_a,f_a\ne \beta_a^+$ and
  $e_b,f_b\ne \beta_b^-$.  For example, assume $e_a=\beta_a^+$; the
  other cases work analogously.  We then have
  $\beta_a^+=\tops(\alpha_i^+)$.  This means $e$ originates from an
  edge in $G$.  Since such edges do not cross middle levels, $g'$ is a
  subcurve of an original edge $\eps_i$.  Especially, we have only
  three vertices per level between $a$ and $b$ that correspond to
  $\gamma,e,f$.

  Let $H\subseteq G^+$ be the subgraph induced by the vertices of
  $(\eps_i)_a^b$, $e_a^b$, $f_a^b$.  Then~$\varphi^+$ satisfies all the
  constraints of $\mathcal S(H,V((\eps_i)_a^b),V((\eps_i)_a^b))$.  However,
  each level of $H$ contains only three vertices, and therefore the
  transitivity constraints are trivially satisfied, i.e.,~$\varphi^+$
  satisfies all the constraints of
  $\mathcal S'(H,V((\eps_i)_a^b),V((\eps_i)_a^b))$.  Thus, by
  Theorem~\ref{the:radialconstraints}, a drawing $\Gamma_H$ of $H$
  according to $\phi^+$ is planar.  I.e., we have
  $\cross_{\Gamma_H}((\eps_i)_a^b,e_a^b,f_a^b) = 0$.  Let $C_a,C_b$ be
  $\eps$-close circles to levels $a$ and $b$, respectively, that lie
  between levels $a$ and $b$.  With
  Lemma~\ref{lem:equalOrderImpliesEvenCrossNumberOf3Edges} we obtain
  that $\eps_i,e,f$ intersect $C_a$ and $C_b$ in the same order.

  Note that $\Gamma^+$ is drawn according to $\phi^+$ in level $a$ and
  in level $b$.  We obtain that the curves for $\eps_i,e,f$ intersect
  $C_a$ in the same order in $\Gamma^+$ and in $\Gamma_H$.  The same
  holds for $C_b$. Hence, the curves intersect $C_a$ and $C_b$ in the
  same order in $\Gamma^+$.  With
  Lemma~\ref{lem:equalOrderImpliesEvenCrossNumberOf3Edges} we have
  $\cross_{\Gamma^+}((\eps_i)_a^b,e_a^b,f_a^b) \equiv 0\mod 2$.  Since
  $\gamma$ is a subcurve of $\eps_i$ and thus not crossed in $\Gamma^+$, this yields
  $\cross_{\Gamma^+}(e_a^b,f_a^b) \equiv 0\mod 2$.
  We thus have shown that any two independent edges have an even
  number of crossings.
\end{proof}

As in the level planar case the converse also holds. 

\begin{restatable}[$\star$]{lemma}{phiFromDrawing}
	\label{lem:HT2TC}
          Let $G^\star$ be a level graph with reference sets $A^+$, $A^-$ for
$G^+$.  If
$G^\star$ admits a Hanani-Tutte drawing, then there exists a satisfying assignment $\varphi$ of $\mathcal S(G^+,A^+,A^-)$.
\end{restatable}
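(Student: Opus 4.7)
The plan is to essentially reverse the construction used in Lemma~\ref{lem:existencephiplus} together with the preceding drawing-construction argument. Starting from the Hanani-Tutte drawing $\Gamma^\star$, I would first induce a drawing $\Gamma^+$ of $G^+$ by placing each subdivision vertex at the intersection of its $G^\star$-edge with the appropriate level circle. From $\Gamma^+$ I then read off a candidate truth assignment $\psi^+$ by declaring $\psi^+(\alpha u v)$ true precisely when $\alpha,u,v$ appear in clockwise order on the level circle of $\ell(\alpha)$, where $\alpha\in A^+\cup A^-$ is the reference vertex on that level. By construction $\psi^+$ trivially satisfies every consistency clause of $\mathcal L$ as well as every cyclic-ordering constraint of $\mathcal C$, since on each level the two linearizations are extracted from the very same cyclic order of $\Gamma^+$.

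The obstacle is that $\psi^+$ need not satisfy the planarity constraints $\mathcal P$: each crossing between two edges of $G^\star$ between adjacent level circles flips the relative order of the corresponding subdivision vertices in $\Gamma^+$. To repair this I would define $\varphi^+$ so that for subdivision vertices sitting on an independent (hence critical) pair of edges $e,f$ of $G^\star$, the assignment is taken from $\psi^+$ at a canonical ``limit'' level of $e,f$ (e.g., the innermost shared level) rather than at the current level; on vertices that are actual vertices of $G^\star$ or that stem from an edge adjacent to a reference edge, keep $\varphi^+ = \psi^+$. Because this redefinition only transports one single bit of $\psi^+$ along a stretched edge, the consistency clauses and $\mathcal C$ remain satisfied.

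The verification of the planarity constraints is where Lemma~\ref{lem:equalOrderImpliesEvenCrossNumberOf3Edges} becomes the central tool. For an independent critical pair $e,f$ of $G^\star$, form the reference curve $\gamma$ from reference edges and within-level arcs between $\beta_j^+$ and $\beta_j^-$, exactly as in the preceding proof. Reference edges are not crossed by anything and $\Gamma^\star$ is Hanani-Tutte, so $\cross(e,f,\gamma)\equiv 0\bmod 2$. Lemma~\ref{lem:equalOrderImpliesEvenCrossNumberOf3Edges} then forces the cyclic orders of $e,f,\gamma$ at the innermost and outermost shared levels of $e,f$ to coincide, which via the definition of $\varphi^+$ is precisely constraint~\eqref{eq:radial-constraint-planarity-one}. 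A level-by-level propagation of the same parity identity, together with the fact that $\mathcal C$ holds for $\psi^+$, pushes the agreement through all intermediate levels.

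The hardest part will be the bookkeeping around edges sharing an endpoint with a reference edge (constraints~\eqref{eq:radial-constraint-planarity-two}--\eqref{eq:radial-constraint-planarity-four}) and around middle levels of the modified star form, where $\beta_j^+$ and $\beta_j^-$ may differ and the roles of $\tops$- and $\bots$-endpoints change. Mirroring the Case~2 argument of the preceding proof, I would handle these by restricting to the subgraph spanned by the three relevant edges: its levels contain only three vertices, so the transitivity constraints hold vacuously and Theorem~\ref{the:radialconstraints} applied to this subgraph certifies planarity of a small auxiliary drawing, from which Lemma~\ref{lem:equalOrderImpliesEvenCrossNumberOf3Edges} again transfers the required order-agreement back to $\Gamma^+$. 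The rest is a careful but routine case analysis, exploiting that the whole construction is designed to be symmetric to the forward direction.
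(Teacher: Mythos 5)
Your high-level plan (read a naive assignment off the drawing, then repair the planarity constraints using a crossing-parity argument via Lemma~\ref{lem:equalOrderImpliesEvenCrossNumberOf3Edges}) is the right intuition, but the concrete construction you outline is underspecified in the radial setting and would run into several genuine gaps.

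First, the ``transport a single bit from the limit level'' idea, borrowed from Lemma~\ref{lem:hanani-tutte-implies-satisfiability}, does not carry over directly. The variables are now \emph{triples} $\alpha_j^\pm u''w''$, and the reference vertex $\alpha_j^\pm$ changes from level to level; when $\alpha_j^\pm$ itself is a subdivision vertex, the reference curve contributes its own crossings with $e$ and $f$, and the agreement of orderings at two levels is governed by the parity of \emph{all three} pairwise crossings, not just $\cross(e,f)$. Your description transports a bit associated to the pair $(e,f)$ only, which is exactly what would fail. The paper avoids this by defining a single uniform correction
$\phi(xyz)\equiv\psi(xyz)+\pcr(e(x)^j,e(y)^j,e(z)^j)\pmod 2$,
which correctly includes the crossings of $e(x)$, i.e.\ of the reference edge, and makes no case split on critical/noncritical pairs.

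Second, a Hanani-Tutte drawing only guarantees even crossings for \emph{independent} pairs; adjacent edges may cross oddly. The paper handles this by introducing \emph{phantom crossings} (the $\pcr$ correction), so that the parity bookkeeping is uniform over all pairs. Your proposal simply ``keeps $\varphi^+=\psi^+$'' for vertices stemming from edges adjacent to a reference edge, which does not address the odd-crossing adjacencies that actually occur in a general Hanani-Tutte drawing and which break the $\mathcal P$-constraints.

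Third, the claim that the repaired assignment still ``trivially'' satisfies the cyclic-ordering constraints $\mathcal C$ is unjustified and in fact is the most delicate point of the paper's proof: after adding $\pcr$-corrections, showing that Eq.~\eqref{eq:radial-constraint-merge-one} survives requires the four $\pcr$-terms to cancel pairwise mod $2$, which the paper verifies by an explicit computation. Once you redefine $\varphi^+$ on subdivision vertices while leaving it unchanged on $G^\star$-vertices, the two triples in Eq.~\eqref{eq:radial-constraint-merge-one} mix both kinds of vertices and the statement is no longer automatic.

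Finally, the appeal to Theorem~\ref{the:radialconstraints} in the manner of Case~2 of the preceding lemma is backwards. That argument starts from an assignment $\varphi^+$ known to satisfy $\mathcal S(G^+,B^+,B^-)$ and observes that on a $\le 3$-vertices-per-level subgraph $H$ transitivity is vacuous, hence $\varphi^+|_H$ satisfies $\mathcal S'(H,\cdot)$ and $H$ is planar. In the present direction you only have a Hanani-Tutte drawing of $H$, not a planar drawing and not yet a satisfying assignment; Theorem~\ref{the:radialconstraints} gives you nothing unless you smuggle in a Hanani-Tutte-to-planar step on $H$, which is precisely what this transformation is supposed to avoid relying on. The paper's actual proof never uses Theorem~\ref{the:radialconstraints} for this direction; it verifies each $\mathcal P(\eps_j)$ clause directly by mod-$2$ computations on $\pcr$ of the relevant prefixes together with Lemma~\ref{lem:equalOrderImpliesEvenCrossNumberOf3Edges}.
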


\begin{theorem}
  Let $G$ be a proper level graph with reference sets $A^+$, $A^-$.  Then\\
        \indent $\mathcal S(G, A^+, A^-) \text{ is satisfiable} \iff G^\star \text{ has a Hanani-Tutte radial level drawing}$ \\
        \indent $\phantom{\mathcal S(G, A^+, A^-) \text{ is satisfiable}} \iff G       \text{ is radial level planar.}$
\end{theorem}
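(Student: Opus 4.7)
The plan is to prove the theorem as a triangle of implications (A) $\Rightarrow$ (B) $\Rightarrow$ (C) $\Rightarrow$ (A), where (A), (B), and (C) denote, respectively, the statements ``$\mathcal S(G,A^+,A^-)$ is satisfiable'', ``$G^\star$ has a Hanani-Tutte radial level drawing'', and ``$G$ is radial level planar''. Each implication is a direct appeal to a result already established in this section, so the theorem itself is essentially bookkeeping.

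For (A) $\Rightarrow$ (B), starting from a satisfying assignment $\varphi$ of $\mathcal S(G,A^+,A^-)$, I would first apply the lifting lemma stated just above to obtain a satisfying assignment $\varphi^+$ of $\mathcal S(G^+,B^+,B^-)$. Then I would feed $\varphi^+$ into the explicit drawing construction described earlier in this section to obtain the radial drawing $\Gamma^+$ of $G^+$, and smooth out the subdivisions to produce a drawing $\Gamma^\star$ of $G^\star$; the lemma immediately preceding this theorem then guarantees that $\Gamma^\star$ is a Hanani-Tutte drawing. For (B) $\Rightarrow$ (C), I would invoke the strong Hanani-Tutte theorem for radial level planarity~\cite{fps-htfrp2-16} to promote the Hanani-Tutte drawing of $G^\star$ to an honest radial level planar drawing, and then apply Lemma~\ref{lem:normalization-lemma} to transport radial level planarity from $G^\star$ back to $G$. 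For (C) $\Rightarrow$ (A), Theorem~\ref{the:radialconstraints} turns a radial level planar drawing of $G$ into a satisfying assignment of $\mathcal S'(G,A^+,A^-)$; since $\mathcal S(G,A^+,A^-)$ is obtained from $\mathcal S'$ simply by dropping the transitivity constraints~\eqref{eq:radial-constraint-transitivity}, the same assignment satisfies $\mathcal S(G,A^+,A^-)$.

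No single step here is genuinely hard, so the main obstacle is keeping the bookkeeping straight: the three graphs $G$, $G^\star$, $G^+$ carry two distinct pairs of reference sets ($A^+,A^-$ for $G$ and $G^\star$, and $B^+,B^-$ for $G^+$), and at each implication one must track which constraint system is indexed by which pair. The philosophically unsatisfying aspect of this route is that (B) $\Rightarrow$ (C) leans on the external Hanani-Tutte theorem of Fulek et al.~\cite{fps-htfrp2-16}; replacing it with a combinatorial route through Lemma~\ref{lem:HT2TC} followed by a direct $G^+$-to-$G$ translation analogous to Lemma~\ref{lem:hanani-tutte-implies-satisfiability} would yield a self-contained correctness proof, but the paper explicitly leaves this as future work.
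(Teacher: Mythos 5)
Your triangle (A) $\Rightarrow$ (B) $\Rightarrow$ (C) $\Rightarrow$ (A) is exactly the route the paper takes: (A) $\Rightarrow$ (B) via the lifting lemma for $\varphi^+$, the drawing construction, and the lemma that $\Gamma^\star$ is Hanani-Tutte; (B) $\Rightarrow$ (C) via the Fulek et al.\ strong Hanani-Tutte theorem for radial level planarity together with Lemma~\ref{lem:normalization-lemma}; and (C) $\Rightarrow$ (A) via Theorem~\ref{the:radialconstraints} followed by dropping the transitivity constraints. Your closing remark about the dependence on the external Hanani-Tutte theorem, and the observation that Lemma~\ref{lem:HT2TC} plus a $G^+$-to-$G$ translation would make the argument self-contained, mirrors precisely what the paper flags as future work.
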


\section{Conclusion}

We have established an equivalence of two results on level planarity
that have so far been considered as independent.  The novel connection
has further led us to a new testing algorithm for radial level
planarity.  Can similar results be achieved for level planarity on a
rolling cylinder or on a torus~\cite{addfp-blp-16}?

\newpage
\bibliographystyle{splncs04}
\bibliography{references}

\newpage
\appendix

\section{Omitted Proofs from Section~\ref{sec:preliminaries}}

\normalizationlemma*

\begin{proof}
    The first equivalence is due to Fulek et al.~\cite{fpss-htmdalp-13}.
    The forward direction is trivially true.
    For the reverse direction, the key insight is that for every level~$i$ of~$G$ there is a level~$i'$ of~$G^\star$ so that for each vertex~$v$ with~$\ell(v) = i$ its stretch edge~$e(v)$ crosses level~$i'$ in~$G^\star$.
    The second equivalence is obvious, because~$G^+$ is simply the proper subdivision of~$G^\star$.
\end{proof}

\section{Correctness of the Radial Constraint System}

\radialconstraints*

\begin{proof}
    Clearly, the radial level planar drawings of~$G$ correspond injectively to the satisfying assignments of~$\mathcal S'(G, A^+, A^-)$.
    For the reverse direction, consider a satisfying assignment~$\varphi$ of~$\mathcal S'(G, A^+, A^-)$.
    Start by observing that the
constraints~\eqref{eq:radial-constraint-consistency} and~\eqref{eq:radial-constraint-transitivity} ensure that the orders defined for all~$u, v \in V_i \setminus \{\alpha_i^-, \alpha_i^+\}$ by the variables~$\alpha_i^-uv$ and~$\alpha_i^+uv$ are linear.
    Define~$\mathbb A = \{ v \in V_i \setminus \{\alpha_i^-, \alpha_i^+\} \mid \alpha_i^- v \alpha_i^+\}$ and~$\mathbb B = \{ v \in V_i \setminus \{\alpha_i^-, \alpha_i^+\} \mid \alpha_i^- \alpha_i^+ v\}$.
    Let~$\sigma_i^-, \sigma_i^+$ denote the induced cyclic orders.
    We have to show~$\sigma_i^- = \sigma_i^+$.
    To do so, consider pairwise distinct~$u, v, w \in V_i$ and show~$(u, v, w) \in \sigma_i^- \implies (u, v, w) \in \sigma_i^+$.
    Distinguish three cases.
    \begin{enumerate}
        \item $u, v, w \not\in \{\alpha_i^-, \alpha_i^+\}$.
              We may assume $\alpha_i^- u v$, $\alpha_i^- v w$, $\alpha_i^- u w$.
              Distinguish four cases based on how the vertices $u$, $v$ and $w$ are distributed over the sets $\mathbb A$ and $\mathbb B$.
              \begin{enumerate}
                  \item $u, v, w \in \mathbb A$.
                        It is $\alpha_i^- u \alpha_i^+$, $\alpha_i^- v \alpha_i^+$, $\alpha_i^- w \alpha_i^+$.
                        From $\alpha_i^- u v$, $\alpha_i^- u \alpha_i^+$,
$\alpha_i^- v \alpha_i^+$ and constraint~\eqref{eq:radial-constraint-merge-one} we conclude $\alpha_i^+ u v$.
                        Similarly, we conclude $\alpha_i^+ v w$ and $\alpha_i^+ u w$, which then gives us $(u, v, w) \in \sigma_i^+$.
                        \label{itm:8875950323}
                  \item $u, v, \in \mathbb A, w \in \mathbb B$.
                        It is $\alpha_i^- u \alpha_i^+$, $\alpha_i^- v \alpha_i^+$, $\neg \alpha_i^- w \alpha_i^+$.
                        Using constraint~\eqref{eq:radial-constraint-merge-one} we conclude $\alpha_i^+ u v$, $\neg \alpha_i^+ v w$, $\neg \alpha_i^+ u w$, which gives $(u, v, w) \in \sigma_i^+$.
                        \label{itm:9641594092}
                  \item $u, \in \mathbb A, v, w \in \mathbb B$; similar to case~\ref{itm:8875950323}.
                  \item $u, v, w \in \mathbb B$; similar to case~\ref{itm:9641594092}.
              \end{enumerate}
        \item $u, v \not\in \{\alpha_i^-, \alpha_i^+\}, w = \alpha_i^-$
              Distinguish three cases based on how the vertices $u$ and $v$ are distributed over the sets $\mathbb A$ and $\mathbb B$.
              \begin{enumerate}
                  \item $u, v \in \mathbb A$.
                        Then it is $\alpha_i^- u \alpha_i^+$ and $\alpha_i^- v \alpha_i^+$.
                        Constraint~\eqref{eq:radial-constraint-merge-one} then gives $\alpha_i^+ u v$.
                        Constraint~\eqref{eq:radial-constraint-merge-two} gives $\alpha_i^+ v \alpha_i^-$ and $\alpha_i^+ u \alpha_i^-$.
                        This means $(u, v, \alpha_i^- = w) \in \sigma_i^+$.
                  \item $u \in \mathbb A, v \in \mathbb B$.
                        Then it is $\alpha_i^- u \alpha_i^+$, $\alpha_i^- \alpha_i^+ v$ and $\alpha_i^- u v$.
                        Using
constraints~\eqref{eq:radial-constraint-consistency} and~\eqref{eq:radial-constraint-merge-two} we obtain $\alpha_i^+ v \alpha_i^-$ from $\alpha_i^- \alpha_i^+ v$.
                        Using constraint~\eqref{eq:radial-constraint-merge-two} we get $\alpha_i^+ \alpha_i^- u$ from $\alpha_i^- u \alpha_i^+$.
                        Finally, we use constraint~\eqref{eq:radial-constraint-merge-one} to obtain $\alpha_i^+ u v$.
                        This means $(u, v, \alpha_i^- = w) \in \sigma_i^+$.
                  \item $u, v \in \mathbb B$
                        Use constraint~\eqref{eq:radial-constraint-merge-two} to obtain $\alpha_i^+ u \alpha_i^-$ and $\alpha_i^+ v \alpha_i^-$ from $\alpha_i^- \alpha_i^+ u$ and $\alpha_i^- \alpha_i^+ v$, respectively.
                        Then use constraint~\eqref{eq:radial-constraint-merge-one} to obtain $\alpha_i^+ u v$, which means $(u, v, \alpha_i^- = w) \in \sigma_i^+$.
              \end{enumerate}
        \item $u, v \not\in \{\alpha_i^-, \alpha_i^+\}, w = \alpha_i^+$
              Distinguish three cases based on how the vertices $u$ and $v$ are distributed over the sets $\mathbb A$ and $\mathbb B$.
              \begin{enumerate}
                  \item $u, v \in \mathbb A$.
                        Use constraint~\eqref{eq:radial-constraint-merge-two} to obtain $\alpha_i^+ \alpha_i^- v$ and $\alpha_i^+ \alpha_i^- u$ from $\alpha_i^- \alpha_i^+ v$ and $\alpha_i^- \alpha_i^+ u$, respectively.
                        Then use constraint~\eqref{eq:radial-constraint-merge-one} to obtain $\alpha_i^+ u v$, which means $(u, v, \alpha_i^+ = w) \in \sigma_i^+$.
                  \item $u \in \mathbb B, v \in \mathbb A$.
                        Use constraint~\eqref{eq:radial-constraint-merge-two} to obtain $\alpha_i^+ \alpha_i^- v$ and $\alpha_i^+ u \alpha_i^-$ from $\alpha_i^- v \alpha_i^+$ and $\alpha_i^- \alpha_i^+ u$, respectively.
                        Then use constraint~\eqref{eq:radial-constraint-merge-one} to obtain $\alpha_i^+ u v$, which means $(u, v, \alpha_i^+ = w) \in \sigma_i^+$.
                  \item $u, v \in \mathbb B$.
                        Use constraints~\eqref{eq:radial-constraint-consistency}
and~\eqref{eq:radial-constraint-merge-two} to obtain $\alpha_i^+ u \alpha_i^-$ and $\alpha_i^+ v \alpha_i^-$ from $\alpha_i^- \alpha_i^+ u$ and $\alpha_i^- \alpha_i^+ v$, respectively.
                        Then use constraint~\eqref{eq:radial-constraint-merge-one} to obtain $\alpha_i^+ u v$, which means $(u, v, \alpha_i^+ = w) \in \sigma_i^+$.
              \end{enumerate}
        \item $v \not\in \{\alpha_i^-, \alpha_i^+\}, u, w \in \{\alpha_i^-, \alpha_i^+\}$.
              We may assume $u = \alpha_i^-$ and $w = \alpha_i^+$.
              Then $(u, v, w) \in \sigma_i^-$ gives $\alpha_i^- v \alpha_i^+$
and constraint~\eqref{eq:radial-constraint-merge-two} yields $\alpha_i^+ \alpha_i^- v$, which means $(\alpha_i^- = u, v, \alpha_i^+ = w) \in \sigma_i^+$.
    \end{enumerate}

    \noindent Therefore,~$\varphi$ induces well-defined cyclic orders of the vertices on all levels.
    It remains to be shown that no two edges cross.
    Recalling that~$G$ is proper, it is sufficient to show that no two edges~$e, f \in (E_i \cup E_i^- \cup E_i^+)$ cross for~$i = 1, 2, \ldots, k$.

    \begin{enumerate}
        \item $e, f \in E_i$.
              Then constraint~\eqref{eq:radial-constraint-planarity-one} together with the fact that no edge may cross $\varepsilon_i$ implies that $e$ and $f$ do not cross.
        \item $e \in E_i, f \in E_i^-$.
              Let $e = (u, v)$ and $f = (u'', \alpha_{i + 1}^-)$ and suppose $l(f)$ (the case $\neg l(f)$ works symmetrically).
              Then constraint~\eqref{eq:radial-constraint-planarity-four} ensures that $e$ and $f$ do not cross.
              \label{itm:6268188379}
        \item $e \in E_i, f \in E_i^+$; works symmetrically to
case~\ref{itm:6268188379} with constraint~\eqref{eq:radial-constraint-planarity-three}.
        \item $e \in E_i^-, f \in E_i^+$.
              Then constraint~\eqref{eq:radial-constraint-planarity-two} ensures that $e$ and $f$ are embedded locally to the left and right of $\varepsilon_i$, respectively, or vice versa.
              Together with the fact that no edge may cross $\varepsilon_i$, this means that $e$ and $f$ do not cross.
        \item $e, f \in E_i^-$ or $e, f \in E_i^+$.
              Because $e$ and $f$ share an endpoint they do not cross.
    \end{enumerate}

    \noindent Therefore, no two edges cross, which means that~$\varphi$ induces a radial level planar drawing.
\end{proof}

\section{Construction of $\boldsymbol{\varphi^+}$}
\label{sec:phiplus}

\existencephiplus*

For the proof of this result, we assume that~$\varphi$ is a satisfying
assignment for~$\mathcal S(G,A^+,A^-)$.  We now derive a truth
assignment $\varphi^+$ for $\mathcal S(G^+,B^+,B^-)$ from $\varphi$.
Afterwards, we show that $\varphi^+$ satisfies~$\mathcal S(G^+,B^+,B^-)$.

Let $e=uu',f=vv'$ be two edges between level $i$ and level $i+1$ of
$G$ and let~$\eps_i = (\alpha_i^+,\alpha_{i+1}^-)$ denote the
reference edge between these levels.  We introduce a function
$\psi(\eps_i,e,f)$ to deduce the order of the edges~$\eps_i,e,f$ in a
drawing that corresponds to~$\varphi$.
\[
\psi(\eps_i,e,f)=\begin{cases}
		\phi(\alpha_i^+uv)&, \text{ if $\alpha_i^+,u,v$ are pairwise distinct}\\ 
		\phi(\alpha_{i+1}^-u'v')&, \text{ if $\alpha_{i+1}^-,u',v'$ are pairwise distinct}\\ 
		\phi(\l(f))&, \text{ if $\alpha_i^+=v\ne u$ or $\alpha_{i+1}^-=v'\ne u'$}\\
                \neg\phi(\l(e))&, \text{ if $\alpha_i^+=u\ne v$ or $\alpha_{i+1}^-=u'\ne v'$}
	\end{cases}
\]

Note that if~$\alpha_i^+,u,v$ and~$\alpha_{i+1}^-,u',v'$ are pairwise
distinct, then by Eq.~(\ref{eq:radial-constraint-planarity-one}) it
is~$\phi(\alpha_i^+uv) = \phi(\alpha_{i+1}^-uv)$.  Similarly,
if~$\alpha_i^+ = v$ and~$\alpha_{i+1}^- = u'$ (or~$\alpha_i^+=u$
and~$\alpha_{i+1}^-=v'$), then by
Eq.~(\ref{eq:radial-constraint-planarity-two}) it
is~$\phi(\l(f)) = \neg(\phi(\l(e))$.  Therefore~$\psi$ is well-defined.

Based on this, we can now define the orderings of triples of
subdivision vertices of~$G^+$, which leads to an
assignment~$\varphi^+$ for~$\mathcal S(G^+,B^+,B^-)$.

To this end, we define a mapping~$O \colon V(G^+) \to V$ that maps each vertex of~$V(G^+)$ to a vertex in~$G$.
For each vertex~$v$ of~$G^+$ that is part of a stretch edge~$e(w)$ of~$G^\star$ for some vertex~$w$ of~$G$, we set~$O(v) = w$.
For an example, see the orange vertex~$w$ of~$G$ in Fig.~\ref{fig:o}.
The encircled orange vertices in~$G^+$ are mapped to~$w$.
It remains to define~$O(v)$ for vertices~$v \in V(G^+)$ that are subdivision vertices of an edge~$xx'$ of~$G^\star$ that its not a stretch edge.
Then it is~$L(x') = L(x)+1$, and we map~$v$ to the vertex~$w$ with~$x = \tops(w)$ if~$L(v) = L(x)$ and to the vertex~$w'$ with~$x' = \bots(w')$, otherwise.
For an example, see the edge~$(x, x')$ of~$G$ in Fig.~\ref{fig:o}.
The encircled purple (green) subdivision vertices in~$G^+$ are mapped to~$x$~($x'$).

\begin{figure}[t]
    \centering
    \includegraphics{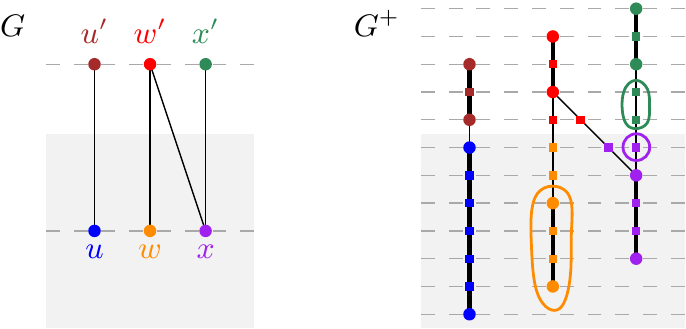}
    \caption{
        Definition of the mapping $O \colon V(G^+) \to V$.
        Vertices of $G^+$ are mapped to the vertex of $G$ with the matching color.
    }
    \label{fig:o}
\end{figure}

We are now ready to define the assignment~$\varphi^+$
for~$\mathcal S(G^+,B^+,B^-)$ for a satisfying assignment~$\varphi$
of~$\mathcal S(G,A^+,A^-)$.  Let $x,y,z\in V_j$ be three distinct
vertices on level $j$ with $L(j) = i$ and such that
$x\in\{\beta_j^-,\beta_j^+\}$.  Observe that
$O(x) \in \{\alpha_i^+,\alpha_{i+1}^-\}$.
If the vertices~$O(x)$, $O(y)$, $O(z)$ are pairwise distinct, we
set~\[\phi^+(xyz) = \phi\left(O(x)O(y)O(z)\right)\,.\]

Otherwise, two of the corresponding vertices are the same. Hence, two
vertices of $x,y,z$ are subdivision vertices of original edges in $G$.
Note that $x$ can by definition of $G^\star$ not be such a subdivision
vertex. Hence, vertex $y$ is a subdivision vertex of an edge $e\in G$
and vertex $z$ is a subdivision vertex of an edge $f\in G$, both of
which connect a vertex of level $i$ to a vertex on level $i+1$. Then
we set \[\phi^+(xyz)=\psi(\eps_i,e,f)\,,\]
where~$\eps_i = (\alpha_i^+,\alpha_{i+1}^-)$ is the reference edge
between these levels. 
Moreover, we set~$\phi^+(\l(e)) = \phi(\l(O(u),O(v)))$ for each edge $e=(u,v)$
of~$G^+$ with $u=\beta_j^+$ or~$v= \beta_{j+1}^-$ but not both for
some level $j$ of $G^+$.

\begin{lemma}
  The assignment $\varphi^+$ satisfies~$\mathcal S(G^+,B^+,B^-)$.
\end{lemma}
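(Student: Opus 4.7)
Plan. I would verify the three kinds of clauses in $\mathcal S(G^+,B^+,B^-)$ separately---the consistency clauses (\ref{eq:radial-constraint-consistency}), the cyclic ordering constraints $\mathcal C_{G^+}(\beta_j^+,\beta_j^-)$, and the planarity constraints $\mathcal P_{G^+}(\delta_j)$---by pushing each one through the mapping $O$ and reducing it to a clause already satisfied by $\varphi$ in $\mathcal S(G,A^+,A^-)$.  The useful preparatory observation is that the auxiliary $\psi(\eps_i,e,f)$ is antisymmetric in its last two arguments and consistent across its overlapping branches: the first two branches inherit antisymmetry from (\ref{eq:radial-constraint-consistency}) applied to $\varphi$, while the overlap between the third and fourth branches is precisely handled by the planarity clause (\ref{eq:radial-constraint-planarity-two}), which forces $\varphi(\l(e)) = \neg\varphi(\l(f))$ whenever $e \in E_i^-$ and $f \in E_i^+$.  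Given this, the consistency clauses for $\varphi^+$ at a triple $(xyz)$ split into two subcases depending on whether $O(x),O(y),O(z)$ are pairwise distinct---when they are, consistency inherits directly from $\varphi$; when two collapse, it inherits from the antisymmetry of $\psi$.

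Next I would dispatch the cyclic ordering constraints $\mathcal C_{G^+}(\beta_j^+,\beta_j^-)$, which are non-trivial only on middle levels $m_i$ where $\alpha_i^+\ne\alpha_i^-$; by construction $B^+ = B^-$ everywhere else.  On such an $m_i$ the two reference vertices have $O$-images $\alpha_i^-,\alpha_i^+$, and every other vertex of $V_{m_i}(G^+)$ lies on a stretch edge, hence is sent by $O$ to some third vertex of $V_i \setminus \{\alpha_i^+,\alpha_i^-\}$.  Therefore both (\ref{eq:radial-constraint-merge-one}) and (\ref{eq:radial-constraint-merge-two}) for $\varphi^+$ unfold literally into the corresponding clauses of $\mathcal C_G(\alpha_i^+,\alpha_i^-)$ for $\varphi$, which hold by assumption.

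The planarity constraints $\mathcal P_{G^+}(\delta_j)$ are where the bulk of the work lies, and I would split them by the geometric position of the level pair $(j,j+1)$ relative to the middle levels.  In each \emph{intra-block} case where $j$ and $j+1$ both lie in $L^{-1}(i)$ for the same $i$, every edge between these two levels is a piece of a stretch edge, $\varphi^+$ is defined on such pieces as a constant $\varphi$-value at level $i$, and the clauses at $\delta_j$ collapse to tautologies.  In each \emph{inter-block} case, where $j$ is a $\tops$-level of $L^{-1}(i)$ and $j+1$ is a $\bots$-level of $L^{-1}(i+1)$, the edges between $j$ and $j+1$ correspond bijectively under $O$ to the edges of $E_i \cup E_i^+ \cup E_i^-$, and each of (\ref{eq:radial-constraint-planarity-one})--(\ref{eq:radial-constraint-planarity-four}) at $\delta_j$ unfolds---through the appropriate branch of $\psi$, or directly via $O$ when neither endpoint is a reference vertex---into the matching clause of $\mathcal P_G(\eps_i)$ on $\varphi$.

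The main obstacle will be the careful bookkeeping in the inter-block case: one has to match which of the four branches in the definition of $\psi$ fires (depending on how many endpoints of the pair of edges coincide with a reference vertex) with which of the four planarity clauses (\ref{eq:radial-constraint-planarity-one})--(\ref{eq:radial-constraint-planarity-four}) is being checked in $G^+$, and in particular to verify that the definition $\varphi^+(\l(e)) = \varphi(\l(O(u),O(v)))$ on edges incident to $\beta_j^+$ or $\beta_{j+1}^-$ interacts correctly with (\ref{eq:radial-constraint-planarity-two}), (\ref{eq:radial-constraint-planarity-three}), and (\ref{eq:radial-constraint-planarity-four}).  Each individual verification is mechanical once the cases are laid out, but matching branches to clauses---and handling the asymmetric roles of $\beta^+$ and $\beta^-$ around the middle level---absorbs most of the effort.
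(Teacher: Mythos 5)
Your overall strategy --- verify consistency, cyclic-ordering, and planarity clauses separately by pushing each through $O$ and reducing to the corresponding clause on $\varphi$ --- is exactly the one the paper follows, and the consistency and cyclic-ordering parts of your plan match the paper. However, the planarity part contains two concrete gaps that the paper's proof is forced to close.

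First, in the intra-block case ($L(j)=L(j+1)=i$) your premise that ``every edge between these two levels is a piece of a stretch edge'' is false. An edge $e^\star$ of $G^\star$ originating from an edge $(u,v)\in E(G)$ starts at $\tops(u)$, which lies strictly below the last level of $L^{-1}(i)$, and so its subdivision path in $G^+$ crosses several levels of $L^{-1}(i)$. Hence two edges $(y,y'),(z,z')$ between $j$ and $j+1$ can be subdivision edges of two original $G$-edges sharing the vertex $O(y)=O(z)$; in that case $\varphi^+$ on both triples is defined via $\psi(\eps,e,f)$ rather than via a single $\varphi$-value, and one still has to check that the same $\eps$ is used on both sides. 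Moreover, even in the ``nice'' subcase where all three $O$-images are distinct, one needs the additional fact that $O(\beta_j^+)\ne O(y)$ --- i.e.\ that no non-stretch edge between levels $j$ and $j+1$ has $O$-image equal to $\alpha_i^\pm$ --- which the paper establishes by a small structural argument about which vertices of $G^\star$ can have incoming/outgoing non-stretch edges. None of this is a tautology.

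Second, your inter-block reduction ``unfolds into the matching clause of $\mathcal P_G(\eps_i)$'' only applies when $(O(y),O(y'))$ and $(O(z),O(z'))$ are independent in $G$. Two independent edges of $G^+$ between an inter-block pair of levels can map to \emph{adjacent} edges $e,f$ of $G$ (e.g.\ two edges in $E_i$ sharing their source), and for those there simply is no clause in $\mathcal P_G(\eps_i)$ to reduce to. This is precisely the paper's Case~2b, where the verification has to go through $\psi$ directly: both $\varphi^+(\beta_j^+yz)$ and $\varphi^+(\beta_{j+1}^-y'z')$ are shown to equal $\psi(\eps_i,e,f)$, or to unfold into the same constraint from the $O(y')=O(z')$ arm of the definition. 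Without this case the inter-block argument is incomplete. You flag the inter-block bookkeeping as the ``main obstacle,'' but the specific obstacle --- dependent $O$-images --- needs to be named and addressed, and the intra-block case is not as painless as you claim.
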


\begin{proof}We check that~$\varphi^+$ satisfies each part of the
  constraint system.  We denote by~$V_j$ the vertices of~$G^+$ on
  level $j$.
  \begin{description}
  \item[$\boldsymbol{\phi^+}$ satisfies $\boldsymbol{\mathcal L'_{G^+}(\beta_j^+)}$.]
  Let $\beta_j^+\in B^+$. We aim to show that $\phi^+$ satisfies
  $\mathcal L'_{G^+}(\beta_j^+)$.  Let
  $y,z \in V_j\setminus\{\beta_j^+\}$ be distinct.  If
  $O(\beta_j^+),O(y),O(z)$ are distinct, then we have
  $\phi^+(\beta_j^+yz)=\phi(O(\beta_j^+)O(y)O(z))$ as well as
  $\phi^+(\beta_j^+zy)=\phi(O(\beta_j^+)O(z)O(y))$.  Note that
  $O(\beta_j^+)\in \{\alpha_{L(j)}^+,\alpha_{L(j)}^-\}$.  Since $\phi$
  satisfies $\mathcal L'_G(\alpha_{L(j)}^+)$ and
  $\mathcal L'_G(\alpha_{L(j)}^-)$, we obtain
  $\phi^+(\beta_j^+yz)=\neg \phi^+(\beta_j^+zy)$.

  If $O(\beta_j^+),O(y),O(z)$ are not distinct, then we must have
  $O(y)=O(z)$, since $\beta_j^-$ cannot be a subdivision vertex of an
  edge that is not a stretch edge.  Hence, the vertices $y$, $z$ are
  subdivision vertices of edges $e,f$ of $G^\star$ connecting vertices
  on levels $L(j)$ and $L(j)+1$.  Note that this also implies
  $O(\beta_j^+) \ne O(y)$. We obtain
  $\phi^+(\beta_j^-,y,z)=\psi(\eps_i,e,f)$ and
  $\phi^+(\beta_j^-,z,y)=\psi(\eps_i,f,e)$. Since
  $O(y) = O(z)\ne O(\beta_j^-)$, we have with the definition of
  $\psi$ that $\psi(\eps_i,e,f)= \neg \psi(\eps_i,f,e)$. This yields
  $\phi^+(\beta_j^-,y,z)=\neg \phi^+(\beta_j^-,z,y)$.

\item[$\boldsymbol{\phi^+}$ satisfies
  $\boldsymbol{\mathcal L'_{G^+}(\beta_j^-)}$.] This case is analogous
  to the previous case.

\item[$\boldsymbol{\phi^+}$ satisfies
  $\boldsymbol{\mathcal C_{G^+}(\beta_j^+,\beta_j^-)}$.]  We next show
  that $\phi^+$ satisfies the constraints
  $\mathcal C_{G^+}(\beta_j^+,\beta_j^-)$ for all $j$ with
  $\beta_j^-\ne \beta_j^+$.  By definition of~$G^\star$ and~$B^+,B^-$,
  these constraints are non-trivial only for levels $j=m_i$ for a
  level $i$ of $G$ with $\alpha_i^+\ne\alpha_i^-$.
 
  Note that due to the construction of~$G^\star$, level $m_i$ is only
  crossed by stretch edges, which implies that the restriction of~$O$
  to the vertices of~$G^+$ on level $m_i$ is
  injective. Moreover,~$O(\beta_j^+) = \alpha_i^+$
  and~$O(\beta_j^-) = \alpha_i^-$.  Hence, the triple~$x,y,z \in V_j$
  of distinct vertices map injectively to triples of distinct vertices
  of~$G$ on level $i$.  Since the value of~$\varphi^+(xyz)$ is defined
  in terms of~$\varphi(O(x)O(y)O(z))$, it follows that~$\varphi^+$
  satisfies Eq.~(\ref{eq:radial-constraint-merge-one})
  and~(\ref{eq:radial-constraint-merge-two}) if~$\varphi$ does.

\item[$\boldsymbol{\phi^+}$ satisfies
  $\boldsymbol{\mathcal P_{G^+}(\delta_j)}$.]  We
  finally show that $\phi^+$ satisfies the constraints
  $\mathcal P_{G^+}(\delta_j)$ for any two consecutive
  levels $j$ and $j+1$ of $G^+$.  We distinguish two cases, based on
  whether $L(j)=L(j+1)$.

 \noindent \textbf{Case 1.} $L(j)=L(j+1)=i$.
 First observe that, except for $(\beta_j^+,\beta_{j+1}^-)$, the
 reference vertex $\beta_j^+$ has no outgoing edges and
 $\beta_{j+1}^-$ has no incoming edges.  Namely, vertex $\beta_j^+$
 can only have more outgoing edges, if
 $\beta_j^+=\tops(\alpha_{i}^+)$, and vertex $\beta_{j+1}^-$ can only
 have more incoming edges, if $\beta_{j+1}^-=\bots(\alpha_i^-)$.  But
 $\bots(\alpha_i^-)$ and $\tops(\alpha_i^+)$ occupy the first and the
 last level of $G^+$ corresponding to level $i$ of $G$.  Hence this is
 not possible if $L(j)=L(j+1)$. Therefore the
 equations~(\ref{eq:radial-constraint-planarity-two})--(\ref{eq:radial-constraint-planarity-four})
 are trivially satisfied.

 Now consider two independent edges $(y,y'),(z,z')$ between levels $j$
 and $j+1$ that are different from $(\beta_j^+,\beta_{j+1}^-)$.
 
 Since the vertices $y$, $y'$ are adjacent in $G^+$, they either both
 subdivide the same edge of $G^\star$, or one of them is a vertex of
 $G^\star$ and the other one subdivides an incident edge.  As
 $L(j)=L(j+1)$, we obtain $O(y)=O(y')$ in both cases.  Similarly, we
 obtain $O(z)=O(z')$.  If $j\ge m_i$, then we have
 $O(\beta_j^+)=O(\beta_{j+1}^-)=\alpha_i^+$.  If $j < m_i$, then we
 have $O(\beta_j^+)=O(\beta_{j+1}^-)=\alpha_i^-$.  In both cases we
 have $O(\beta_j^+)=O(\beta_{j+1}^-)$.

 Observe that~$O(y) = O(\beta_j^+) = \alpha_i^-$ implies that
 $j <m_i$, and hence~$y$ is a subdivision vertex of an outgoing edge
 of~$\bots(\alpha_i^-)$ in~$G^\star$.  This is, however, impossible
 since the only outgoing edge of~$\bots(\alpha_i^-)$ is the stretch
 edge of~$\alpha_i^-$.  Symmetrically,
 $O(y) = O(\beta_j^+) = \alpha_i^+$ implies $j\ge m_i$, and hence that
 $y$ is a subdivision vertex of an incoming edge
 of~$\tops(\alpha_i^+)$, which is again impossible.  Thus we find that
 $O(\beta_j^+) \ne O(y)$.  Likewise, it is~$O(\beta_j^+) \ne O(z)$.

 If~$O(y) \ne O(z)$, then the vertices~$O(\beta_j^+) = O(\beta_j^-)$,
 $O(y) = O(y')$ and~$O(z) = O(z')$ are pairwise distinct.  Thus~$O$
 maps the triples $t_1 = \beta_j^+yz$ and~$t_2 = \beta_{j+1}^-y'z'$ to
 the same triple $t$.  Since~$\phi^+(t_i) = \phi(t)$ for~$i=1,2$, it
 follows that~$\phi^+(t_1) = \phi^+(t_2)$, i.e.,~$\varphi^+$ satisfies
 Eq.~(\ref{eq:radial-constraint-planarity-one}).

 If $O(y)=O(z)$, then $O(y')=O(z')$, and $(y,y'),(z,z')$ originate
 from edges $e,f \in G$.  Now, if~$j<m_i$, then $e$ and~$f$ connect
 vertices on level~$i-1$ to vertices on level $i$.  In this
 case,~$\phi^+(\beta_j^+ef)$ and~$\phi^+(\beta_j^-ef)$ are both
 defined in terms of~$\psi(\eps,e,f)$,
 where~$\eps = (\alpha_{i-1}^+,\alpha_i^-)$.  If~$j \ge m_i$, then $e$
 and $f$ connect vertices on levels $i$ and~$i+1$.  Then~$\phi^+$ of
 both triples is defined as~$\psi(\eps,e,f)$
 for~$\eps = (\alpha_i^+,\alpha_{i+1}^-)$.  In both case
 Eq.~(\ref{eq:radial-constraint-planarity-one}) is satisfied.

 \noindent \textbf{Case 2.} $i=L(j)<L(j+1)=i+1$.  In this case we have
 $\beta_j^+=\tops(\alpha_i^+)$ and
 $\beta_{j+1}^-=\bots(\alpha_{i+1}^-)$. Let $xx'\in G^+$ be any edge
 between level $j$ and level $j+1$.  Since~$L(j) \ne L(j+1)$,
 $e = (O(x),O(x'))$ is an edge of~$G$.  Further, we obtain
\begin{alignat*}{7}
 &(x,x')&&\in E_i(G^+)   &&{} \Leftrightarrow e\in E_i&&\\
 &(x,x')&&\in E_i^+(G^+) &&{} \Leftrightarrow x=\beta_j^+
&&{}\Leftrightarrow v(x)&&{}=\alpha_i^+&&{}\Leftrightarrow e\in E_i^+\\
 &(x,x')&&\in E_i^-(G^+) &&{} \Leftrightarrow x'=\beta_{j+1}^-
&&{}\Leftrightarrow v(x')&&{}=\alpha_{i+1}^-&&{}\Leftrightarrow e\in E_i^-\,.
\end{alignat*}

Let $yy',zz'$ be distinct edges between levels $j$ and $j+1$ in $G^+$
that are different from $(\beta_j^+,\beta_{j+1}^-)$.  We have that
$e = (O(y),O(y'))$ and $f = (O(z),O(z'))$ are edges of~$G$.
We further distinguish case based on whether $e$ and $f$ are
independent.  Assume $e,f$ are independent.

\textbf{Case2a.} $e$ and $f$ are independent.  
If $e,f\in E_i$, then we have $(y,y'),(z,z') \in E_i(G^+)$ and by
definition of~$\phi^+$ it is
$\phi^+(\beta_j^+yz)=\phi(\alpha_i^+O(y)O(z))$ and
$\phi^+(\beta_{j+1}^-y'z')=\phi(\alpha_{i+1}^-O(y')O(z'))$.  Since
$\phi$ satisfies $\mathcal P_G(\alpha_i^+, \alpha_{i+1}^-)$, it follows
that $\phi^+$ satisfies Eq.~(\ref{eq:radial-constraint-planarity-one})
for $G^+$ and edges $(y,y'), (z,z')$.

If we have $e \in E_i^+$ and $f\in E_i^-$, then we have
$(y,y') \in E_j^+(G^+)$ and $(z,z') \in E_i^-(G^+)$.  The, by
definition of~$\phi^+$, it is
$\phi^+(\l(y,y'))=\phi(\l(\alpha_i^+O(y'))$ and
$\phi^+(\l(z,z'))=\phi(\l(O(z)\alpha_{i+1}^-))$.  Since $\phi$
satisfies $\mathcal P_G(\eps_i)$, it follows that
$\phi^+$ satisfies Eq.~(\ref{eq:radial-constraint-planarity-two}) for
$G^+$ and edges $(y,y'),(z,z')$.

If we have $e\in E_i^+$ and $f\in E_i$, then we have
$(y,y') \in E_i^+(G^+)$ and $(z,z') \in E_i(G^+)$.  By the definition
of~$\phi^+$ it is
$\phi^+(\l(\beta_j^+,y')) = \phi(\l(\alpha_i^+,O(y')))$ and
$\phi^+(\beta_{j+1}^-z'y') = \phi(\alpha_{i+1}^-O(z')O(y'))$.  Since
$\phi$ satisfies $\mathcal P_G(\eps_i)$, it follows
that $\phi^+$ satisfies
Eq.~(\ref{eq:radial-constraint-planarity-three}) for $G^+$ and edges
$(y,y'),(z,z')$.  The case $e\in E_i^-$ and $y\in E_i$ can be argued
analogously.

\noindent\textbf{Case 2b.} $e,f$ are dependent.
If $O(y)=O(z)$, then $y$ and~$z$ are subdivision vertices of two edges
$e'$, $f'$ in~$G^\star$ that share their source, which hence lies on a
level strictly below~$j$.  In particular, it
is~$y,z \ne \tops(\alpha_i^+) = \beta_j^+$.  Moreover, it
is~$O(y') \ne O(z')$.  Thus, by definition, we have
$\phi^+(\beta_j^+yz)=\phi^+(\beta_{j+1}^-y'z') = \psi(\eps_i,e,f)$,
where~$\eps_i = (\alpha_i^+,\alpha_{i+1}^-)$.

If the vertices $O(\beta_j^+) = \alpha_{i+1}^-$, $O(y')$, $O(z')$ are
pairwise distinct, then we have $(y,y'),(z,z') \in E_i(G^+)$.  Then
$\phi^+(\beta_j^+yz)=\phi(\alpha_{i+1}^-O(y')O(z'))$ and
$\phi^+(\beta_{j+1}^-y'z')=\phi(\alpha_{i+1}^-O(x')O(y'))$.  Therefore
$\phi^+$ satisfies Eq.~(\ref{eq:radial-constraint-planarity-one}) for
$G^+$ and edges $(y,y')$ and~$(z,z')$.

If $\alpha_{i+1}^-=O(y')\ne O(z')$, then we have
$(y,y') \in E_i^+(G^+)$ and $(z,z') \in E_i(G^+)$. Then
$\phi^+(\beta_j^+yz)=\phi(\l(e))$ and
$\phi^+(\beta_{j+1}^-y'z')=\phi(\l(e))$.  Therefore $\phi^+$ satisfies
Eq.~(\ref{eq:radial-constraint-planarity-three}) for $G^+$ and edges 
$(y,y)',(z,z')$.

Analogously, we obtain for $\alpha_{i+1}^-=O(z')\ne O(y')$ that
$\phi^+$ satisfies Eq.~(\ref{eq:radial-constraint-planarity-three})
for $G^+$ and edges $(y,y')$ and $(z,z')$.

Finally, the case~$O(y') = O(z')$ can be handled analogously to the
case~$O(y) = O(z)$.
 \end{description}
 
 This concludes the proof that~$\varphi^+$ satisfies
 $\mathcal S(G^+,B^+,B^-)$.
\end{proof}

\section{Even Crossings Criterion}
\label{sec:radial-drawing}

\threecurves*

\begin{proof}
  If there are no crossings, then both sides hold.  Hence, assume there is
  at least one crossing.  By perturbations we achieve that every
  crossing has another distance to the center $m$. We order the
  crossings by distance to $m$.  We add a concentric circle $C_X^Y$
  between any two consecutive crossings $X,Y$ such that $C_X^Y$
  intersects $a,b,c$ each once.
  
  Then, the order of the intersection points of $a,b,c$ must change
  between every two consecutive circles. Thus, that order is the same
  in $C_1$ and $C_2$, if and only if we added an odd number of
  circles. This in turn holds if and only if we have an even number of
  crossings.
\end{proof}

\section{$\phi$ from Hanani-Tutte Drawing}
\label{sec:phi-from-HTdrawing}

\phiFromDrawing*

For the proof of the lemma we first construct the assignment~$\varphi$
and then show that it satisfies~$\mathcal S(G^+,A^+,A^-)$. 
For easier readable formulations we set $\text{true}=0$ and $\text{false}=1$. %

Let $G^\star=(V,E)$ have $k$ levels and with a radial Hanani-Tutte drawing $\Gamma$.
Let $G^+$ be the level graph obtained by subdividing all edges such that $G^+$ is proper.
Let $G^+$ have $\kappa$ levels.

For three distinct vertices $x$, $y$, $z$ on level $j$ with $x=\alpha_j^-$ or
$x=\alpha_j^+$, we set $\psi(xyz)=0$ if and only if $x$, $y$, $z$ appear clockwise on the circle representing level~$j$.

If two edges $e$, $f$ are adjacent in a vertex $v$ with $\cross(e,f)\equiv 1\mod
2$, then we say they have a \emph{phantom crossing} at $v$. We denote by $\pcr$
the function counting crossings and additionally adding $1$ if there is a phantom crossing. 
With the phantom crossings, any two edges in $G$ cross an even number of times, even if they are not independent.
For any edge $e=uv$ in $G^+$ between level $j$ and $j+1$ with $u=\alpha_j^+$ or
$v=\alpha_{j+1}^-$ we set $\psi(\l(e))=0$ if and only if $e$ is locally left of $\eps_j$.
We further set $\phi(\l(e))\equiv \psi(\l(e)) + \cross(e,\eps_j) \mod 2$ to
switch that value in case of a phantom crossing.

Let $v \in V(G^+)$ be a vertex. If $v$ is a subdivision vertex of an edge $e$, then we set $e(v)=e$. Otherwise we set $e(v)=\emptyset$. We say $\emptyset$ has no crossings with anything but stretches over all levels.
This helps to avoid case distinctions.
For an edge $e=(u,v)$ of $G$ we write $e^j$ for the subdivision path of $e$ that starts at $u$ and ends in level $j$. We set $\emptyset^j=\emptyset$.

Let $x,y,z\in V_j(G^+)$ be disjoint with $x=\alpha_j^-$ or $x=\alpha_j^+$.
We set 
$$\phi(xyz)\equiv\psi(xyz)+\pcr(e(x)^j,e(y)^j,e(z)^j)\mod 2\,.$$
We thereby switch the order of $x$, $y$, $z$ if and only if at least two of them
are subdivision vertices and the corresponding edges cross an odd number of
times up to level $j$.
This finishes the construction of~$\varphi$.

\begin{lemma}
  The assignment~$\varphi$ satisfies~$\mathcal S(G^+,A^+,A^-)$.
\end{lemma}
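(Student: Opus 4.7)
The plan is to verify each family of constraints in $\mathcal S(G^+,A^+,A^-)$ separately, reducing everything via the construction $\phi=\psi\oplus\pcr$ to the parity of crossings in the Hanani-Tutte drawing $\Gamma$. Recall that transitivity~\eqref{eq:radial-constraint-transitivity} has been dropped from $\mathcal S$ and so need not be verified. The central tool is Lemma~\ref{lem:equalOrderImpliesEvenCrossNumberOf3Edges}, which translates parity of pairwise crossings among three radially monotone curves into equality of the orders they induce on the two bounding circles; the Hanani-Tutte hypothesis supplies even crossings for independent edge pairs, and the phantom-crossing operator $\pcr$ extends this to \emph{all} edge pairs of $G^\star$.

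The consistency constraints~\eqref{eq:radial-constraint-consistency} are immediate: $\psi(xyz)$ flips under the exchange of $y$ and $z$ by definition of clockwise order, while $\pcr(e(x)^j,e(y)^j,e(z)^j)$ is symmetric in its arguments, so their sum mod~$2$ flips as well. For the planarity constraint~\eqref{eq:radial-constraint-planarity-one} applied to independent edges $(y,y'),(z,z')$ between levels $j$ and $j+1$ of $G^+$, I would expand $\phi(\alpha_j^+ y z) + \phi(\alpha_{j+1}^- y' z')$ mod~$2$ as the sum of two pieces. The first is the parity difference of $\psi$ on the two triples, which by Lemma~\ref{lem:equalOrderImpliesEvenCrossNumberOf3Edges} equals the parity of pairwise crossings among $\eps_j$, $e(y,y')$, and $e(z,z')$ restricted to the strip between levels $j$ and $j+1$. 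The second is the telescoping difference of the two $\pcr$-terms, which equals the parity of the same three-way crossings in the same strip, augmented by phantom corrections at shared endpoints. Since $\pcr$ makes every pair of edges of $G^\star$ cross evenly, both contributions are equal and their sum is $0$. Constraints~\eqref{eq:radial-constraint-planarity-two}--\eqref{eq:radial-constraint-planarity-four}, which involve the $\l$-variables, reduce to the same bookkeeping once one observes that the extra $+\cross(e,\eps_j)$ in the definition of $\phi(\l(e))$ exactly absorbs the parity of crossings of $e$ with the reference edge.

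For the cyclic-ordering constraints~\eqref{eq:radial-constraint-merge-one}--\eqref{eq:radial-constraint-merge-two}, I would work at a fixed level $j$ with $\alpha_j^+ \ne \alpha_j^-$, where the drawing $\Gamma$ already fixes a single cyclic order on the circle representing level $j$. Writing each of $\phi(\alpha_j^- u v)$, $\phi(\alpha_j^+ u v)$, $\phi(\alpha_j^- u \alpha_j^+)$, $\phi(\alpha_j^+ \alpha_j^- u)$ as the sum of a $\psi$-term (read off directly from that geometric cyclic order) and the appropriate $\pcr$-term, the $\pcr$-terms cancel pairwise on the two sides of each equivalence, reducing the constraints to purely geometric identities about linearizations of a single cyclic order, which $\psi$ satisfies by construction.

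The main obstacle will be the careful accounting of phantom crossings in the planarity cases. Many of the instances involve triples of edges that share endpoints: for example, $\eps_j$ shares $\alpha_j^+$ with every edge in $E_j^+(G^+)$, and an edge of $G^+$ with endpoints $u,u'$ has $e(u) = e(u')$ when both are subdivision vertices of the same edge of $G^\star$. In these situations Lemma~\ref{lem:equalOrderImpliesEvenCrossNumberOf3Edges} does not apply verbatim, and one must verify by hand that the phantom-crossing corrections at the shared endpoints cancel out in the final parity equation. The definitions of $\pcr$, of the truncated paths $e^j$, and of the empty-edge convention $\emptyset^j$ are designed to make this cancellation automatic, but confirming it cleanly requires a short case analysis over which of the three curves involved actually terminates or originates at a reference vertex.
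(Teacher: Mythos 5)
Your plan follows the paper's proof essentially verbatim: the decomposition $\phi=\psi\oplus\pcr$, the pairwise cancellation of the $\pcr$-terms for the consistency and cyclic-ordering constraints, the telescoping of the truncated-path crossing counts combined with Lemma~\ref{lem:equalOrderImpliesEvenCrossNumberOf3Edges} for constraint~\eqref{eq:radial-constraint-planarity-one}, and the case analysis over original versus subdivision endpoints that you correctly identify as the main labor. The only point where the paper does slightly more than ``bookkeeping'' is constraint~\eqref{eq:radial-constraint-planarity-two}, where an edge of $E_j^+$ and one of $E_j^-$ meet $\eps_j$ at opposite ends and a short Jordan-curve argument (a closed curve formed by $e$, $\eps_j$ and an arc of circle $j+1$) is used to conclude they lie on opposite sides; your plan would need that small topological step spelled out.
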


\begin{proof}
 First note that $\psi$ satisfies $\mathcal L(\alpha_j^+), \mathcal
L(\alpha_j^-)$ and $\mathcal C(\alpha_j^+,\alpha_j^-)$ for $1\le j\le\kappa$.
\begin{description}
\item[$\boldsymbol\varphi$ satisfies
  $\boldsymbol{\mathcal L(\alpha_j^+), \mathcal L(\alpha_j^-)}$.]  For
  three distinct vertices $x,y,z \in V_j(G^+)$ we have by definition of $\phi$
that $\phi(xyz)+\phi(xzy) \equiv \psi(xyz)+\psi(xzy) \mod 2$.
  Since $\psi$ satisfies Eq.~\eqref{eq:radial-constraint-consistency}, so does
$\varphi$.%

  \item[$\boldsymbol\varphi$ satisfies $\boldsymbol{\mathcal C(\alpha_j^+,\alpha_j^-)}$.]
   For Eq.~\eqref{eq:radial-constraint-merge-one} let $1\le j\le \kappa$ such that
    $\alpha_j^-\ne \alpha_j^+$. Let $u,v\in V_j(G^+)\setminus\{\alpha_j^-,\alpha_j^+\}$ be distinct.  By
    definition of $\phi$ we have that
  \begin{equation*}
\begin{split}
    & (\phi(\alpha_j^-uv) + \phi(\alpha_j^+uv)) + (\phi(\alpha_j^-u\alpha_j^+) +
\phi(\alpha_j^-v\alpha_j^+))\\
     \equiv~& (\psi(\alpha_j^-uv) + \psi(\alpha_j^+uv)) +
(\psi(\alpha_j^-u\alpha_j^+) +
\psi(\alpha_j^-v\alpha_j^+))\\
  & + 2\big(\pcr(e(\alpha_j^-),u) + \pcr(e(\alpha_j^-),v)
+\pcr(e(\alpha_j^+),u) + \pcr(e(\alpha_j^+),v)\\
 & +\pcr(u ,v)+ \pcr(e(\alpha_j^-),e(\alpha_j^+))\big)\\
 \equiv{}& (\psi(\alpha_j^-uv) + \psi(\alpha_j^+uv)) +
(\psi(\alpha_j^−u\alpha_j^+) +
\psi(\alpha_j^-v\alpha_j^+)) \mod 2
\end{split}
\end{equation*}

	Thereby Eq.~\eqref{eq:radial-constraint-merge-one} holds.

       For Eq.~\eqref{eq:radial-constraint-merge-two} let
    $u\in V_j(G^+)\setminus\{\alpha_j^+,\alpha_j^-\}$.  
    With the definition of $\phi$ we obtain
    \[\phi(\alpha_j^-u\alpha_j^+)+\phi(\alpha_j^+\alpha_j^-u)\equiv
    \psi(\alpha_j^-u\alpha_j^+)+\psi(\alpha_j^+\alpha_j^-u)\mod 2\,.\]
    Thereby Eq.~\eqref{eq:radial-constraint-merge-two} holds.

       \end{description}
      
      \noindent It remains to argue that~$\varphi$
      satisfies~$\mathcal P(\eps_j)$ for all levels.  We first make
      the following observation.  Let $(u,u')$ be an edge between
      level $j$ and $j+1$ in $G^+$ that is an original edge in
      $G$. Let $f$ be another original edge in $G$. Since $(u,u'),f$
      are monotone and original edges cross an even number of times,
      we then obtain
	\begin{equation}
		\pcr(f_j^{j+1},uu')=\pcr(f,uu')\equiv 0 \mod 2	\label{eq:sectionCr}
	\end{equation}
        We now show that~$\varphi$ satisfies the individual equations
        of~$\mathcal P(\eps_j)$.

\begin{description}
\item[$\boldsymbol{\varphi}$ satisfies
  Eq.~\eqref{eq:radial-constraint-planarity-one}
  of~$\boldsymbol{\mathcal P(\eps_j)}$.]  Let $1\le j \le \kappa - 1$
  and let $(u,u'),(v,v')\in E_j(G^+)$ be independent.  Then we argue
  that
 \begin{equation}
  \begin{split}
     &{}\phi(\alpha_j^+uv) + \phi(\alpha_{j+1}^-u'v')\\
     \equiv{}&{}\psi(\alpha_j^+uv) + \psi(\alpha_{j+1}^-u'v')
       + \pcr((u,u'),(v,v'),\eps_j)\label{eq:P1switch}
 \end{split}
  \end{equation}
This means we change the order of the ends of $(u,u'),(v,v'),\eps_j$ on exactly one of the levels $j$ and $j+1$ if and only if they cross an odd number of times.
	With Lemma~\ref{lem:equalOrderImpliesEvenCrossNumberOf3Edges} we then obtain
that Eq.~\eqref{eq:radial-constraint-planarity-one} holds for  $(u,u'),(v,v')$.

	With the definition of $\phi$ it suffices to show the following three statements:
\begin{alignat}{3}
		  & \pcr(e(\alpha_{j+1}^-)^{j+1},e(u')^{j+1})  &&  +
\pcr(e(\alpha_j^+)^j,e(u)^j)  && \equiv \pcr(\eps_j, (u,u')) \label{eq:cr_u} \\		  
      & \pcr(e(\alpha_{j+1}^-)^{j+1},e(v')^{j+1})  &&
+\pcr(e(\alpha_j^+)^j,e(v)^j)  && \equiv \pcr(\eps_j, (v,v')) \label{eq:cr_v} \\
		  & \pcr(e(u')^{j+1},e(v')^{j+1})              &&  + \pcr(e(u)^j,e(v)^j)
\label{eq:cr_uv}
 && \equiv \pcr((u,u'),(v,v'))
  \end{alignat}

  We show Eq.~\eqref{eq:cr_uv}. Eq.~\eqref{eq:cr_u} and  Eq.~\eqref{eq:cr_v} can be shown analogously
noting that $\eps_j=(\alpha_j^+,\alpha_{j+1}^-)$ is independent from
$(u,u'),(v,v')$.

	\textbf{Case 1. } $u$, $u'$ or $v$, $v'$ are subdivision vertices.
	Then we obtain $e(u)=e(u')$ or $e(v)=e(v')$. In both cases the left side
reduces with \eqref{eq:sectionCr} to $\pcr((u,u'),(v,v'))$.

	\textbf{Case 2. } $u$, $u'$ or $v$, $v'$ are original vertices.
	Then we have $e(u)=e(u')=\emptyset$ or $e(v)=e(v')=\emptyset$.
	Then the left side equals $0$. By Eq.~\eqref{eq:sectionCr} the right side is also even.

	\textbf{Case 3. } $u$, $v$ are original vertices and $u'$, $v'$ are subdivision vertices. Then  we have $e(u)=e(v)=\emptyset$ and $e(u')^{j+1}=uu'$ and $e(v')^{j+1}=vv'$ and the equivalence holds. 

	\textbf{Case 4. } $u'$, $v'$ are original vertices and $u$, $v$ are subdivision
vertices. Noting $\pcr(e(u)^j,e(v)^j)+\pcr(e(u)_j^\kappa,e(v)_j^\kappa)=\pcr(e(u),e(v))\equiv 0\mod 2$ we can argue analogously.

	\textbf{Case 5. } $u$, $v'$ are original vertices and $u'$, $v$ are subdivision vertices. Then we have $e(u)=e(v')=\emptyset$ and the left side equals $0$. Further, we have\\ $\pcr((u,u'),(v,v'))=\pcr(e(u'),e(v))\equiv 0\mod 2$. I.e., both sides are even.

	\textbf{Case 6. } $u',v$ are original vertices and $u,v'$ are subdivision vertices. Then we can argue analogously.
  Hence, we have that $\phi$ satisfies Eq.~\eqref{eq:radial-constraint-planarity-one}.

\item[$\boldsymbol{\varphi}$ satisfies Eq.~\eqref{eq:radial-constraint-planarity-two} of~$\boldsymbol{\mathcal P(\eps_j)}$.] Let $e=(u,u')\in E_j^+(G^+)$ and let $f=(v,v')\in E_j^-(G^+)$.
   Then we adjust the drawing by perturbing possible phantom crossings of $e,f$
with $\eps_j$ in $\alpha_j^+$ and $\alpha_{j+1}^-$ to the space between circle $j$ and circle $j+1$.
  Thereby, the states of $e,f$ being locally left of $\eps_j$ change if and only if they have a phantom crossing. This is the case if and only if $\psi$ and $\phi$ differ for $\l$ of the corresponding edge. I.e., $\phi(\l(e))$ and $\phi(\l(f))$ describe, whether $e,f$ are locally left of $\eps_j$.
  Consider the closed curve $c$ that is the union of $c(e),c(\eps_j)$ and the part $d$ of circle $j+1$ between $u'$ and $\alpha_{j+1}^-$, such that circle $j$ is on the outside of $c$.
 Then the edge $f$ is outside of $c$ at circle $j$. Note that $\eps_j$ has to be an original edge.
 Since $f$ crosses $e,\eps_j$ an even number of times each (as their
corresponding original edges cross only between level $j$ and $j+1$), and it
does not cross $d$, edge $f$ has to be outside of $c$ at $\eps$-close distance
to circle $j+1$. We thereby obtain $\phi(\l(e))=1-\phi(\l(f))$.  
 
\item[$\boldsymbol{\varphi}$ satisfies
Eq.~\eqref{eq:radial-constraint-planarity-three} of~$\boldsymbol{\mathcal
P(\eps_j)}$.] Let $(u,u')\in E_j^+$ and let $(v,v')\in E_j$ be independent. 
Note that $e(\alpha_{j+1}^-)^{j+1})=\eps_j$ and that $e(u')^{j+1}=(u,u')$ and
likewise for $v'$ since $\alpha_{j+1}^-=u'$ must be an original vertex. 
Let circle $a$ be a circle that is $\eps$-close outside of circle $j$.

 Then we have 
  \begin{align*}
    & \phi(\alpha_{j+1}^-v'u') -  \psi(\alpha_{j+1}^-v'u')\\
 & \equiv
\pcr(e(v')^{j+1},e(u')^{j+1})+\pcr(e(\alpha_{j+1}^-)^{j+1},e(v'))+\pcr(e(\alpha_{j+1}^-)^{j+1},e(u')) \\
 & \equiv  \pcr((v,v'),(u,u'))+\pcr(\eps_j,(v,v'))+\pcr(\eps_j,(u,u'))\\
 & \equiv  \pcr(\eps_j,(v,v'),(u,u'))\\
 & \equiv  \cross(\eps_j,(v,v'),(u,u'))+\pcr((u,u')^a,(v,v')^a,\eps_j^a)\\
 & \equiv  (\psi(\l(u,u'))-\psi(\alpha_{j+1}^-v'u'))+\pcr((u,u')^a,(v,v')^a,\eps_j^a)\\
 & \equiv  \phi(\l(u,u')) - \psi(\alpha_{j+1}^-v'u')\mod 2\,.
  \end{align*}

We obtain $\phi(\l(u,u'))= \phi(\alpha_{j+1}^-v'u')$.
\item[$\boldsymbol{\varphi}$ satisfies Eq.~\eqref{eq:radial-constraint-planarity-four} of~$\boldsymbol{\mathcal P(\eps_j)}$.] Can be argued similarly to the previous case.
\end{description}
\end{proof}

\end{document}